\DeclarePairedDelimiter\abs{\lvert}{\rvert}
\DeclarePairedDelimiter\norm{\lVert}{\rVert}
\DeclareMathOperator*{\argmin}{arg\,min}
\newcommand\indep{\protect\mathpalette{\protect\independenT}{\perp}}
\def\independenT#1#2{\mathrel{\rlap{$#1#2$}\mkern2mu{#1#2}}}
\newcommand\V{\mbox{Var}}
\newcommand\Cov{\mbox{Cov}}
\newtheorem{theorem}{Theorem}
\newtheorem{corollary}{Corollary}
\newtheorem*{remark}{Remark}
\title{Novel Methods for the Analysis of Stepped Wedge Cluster Randomized Trials\thanks{Please address correspondence to: Lee Kennedy-Shaffer, Department of Biostatistics, Harvard T.H. Chan School of Public Health, 655 Huntington Ave., Building 2, 4th Floor, Boston, MA, 02115, USA. Email: lee\_kennedyshaffer@g.harvard.edu. This work was supported by the National Institute of Allergy and Infectious Diseases Award Numbers 5T32AI007358-28 and 1F31AI147745 (for L.K.S.), and R3751164 (for V.d.G.), and by the National Institute of General Medical Sciences Award Number U54GM088558 (for M.L.). The authors wish to thank Professor Michael D. Hughes for his valuable comments and feedback on the research at various stages. The authors also wish to thank Professor Anete Trajman for making the data available from the stepped wedge cluster randomized trial described in Section \ref{Example}.}}
\author{Lee Kennedy-Shaffer\thanks{Department of Biostatistics, Harvard T.H. Chan School of Public Health, Boston, MA, USA}, Victor De Gruttola\footnotemark[2], Marc Lipsitch\thanks{Department of Epidemiology, Department of Immunology and Infectious Diseases, and Center for Communicable Disease Dynamics, Harvard T.H. Chan School of Public Health, Boston, MA, USA}}
\date{September 16, 2019}
\begin{document}

\maketitle

\section*{Abstract}

Stepped wedge cluster randomized trials (SW-CRTs) have become increasingly popular and are used for a variety of interventions and outcomes, often chosen for their feasibility advantages.  SW-CRTs must account for time trends in the outcome because of the staggered rollout of the intervention inherent in the design. Robust inference procedures and non-parametric analysis methods have recently been proposed to handle such trends without requiring strong parametric modeling assumptions, but these are less powerful than model-based approaches. We propose several novel analysis methods that reduce reliance on modeling assumptions while preserving some of the increased power provided by the use of mixed effects models. In one method, we use the synthetic control approach to find the best matching clusters for a given intervention cluster. This approach can improve the power of the analysis but is fully non-parametric. Another method makes use of within-cluster crossover information to construct an overall estimator. We also consider methods that combine these approaches to further improve power. We test these methods on simulated SW-CRTs and identify settings for which these methods gain robustness to model misspecification while retaining some of the power advantages of mixed effects models. Finally, we propose avenues for future research on the use of these methods; motivation for such research arises from their flexibility, which allows the identification of specific causal contrasts of interest, their robustness, and the potential for incorporating covariates to further increase power. Investigators conducting SW-CRTs might well consider such methods when common modeling assumptions may not hold.
\medskip \newline
\textbf{Keywords:} Stepped wedge, cluster randomized trials, mixed effects models, permutation tests, synthetic control

\section{Background}

Cluster randomized trials (CRTs) have become a popular form of randomized trial, with many practical benefits, reflecting the necessity of implementing some interventions on clusters of individuals, and statistical benefits, such as accounting for interference between individuals.\cite{Halloran2010-zd,Eldridge2012-hg,Hayes2017} The causal estimand of interest and the overall risk-benefit profile of the trial  can also affect the choice to use cluster randomization.\cite{Kahn2018-ku,Hitchings2018-is,Bellan2019-dk} While parallel-arm CRTs are the most common, stepped wedge CRTs (SW-CRTs) have also become more common, being used for a variety of interventions.\cite{Brown2006-wu,Hemming2015-yq,Beard2015-kp,Davey2015-no,Barker2016-gn} In SW-CRTs, each cluster begins in the control arm. At designated time points, a cluster or clusters ``cross over'' to the intervention arm and remain in that arm for the duration of the study. The order in which clusters cross over to the intervention is randomized.\cite{Hussey2007-ay,Copas2015-so}

SW-CRTs are especially valuable when the intervention cannot be implemented in a large number of clusters simultaneously due to practical constraints.\cite{Kahn2018-ku,Brown2006-wu,Hemming2015-yq,Prost2015-ug} They can also be useful when the communities who will participate in the trial wish to ensure that all clusters receive the intervention before the end of the trial.\cite{Brown2006-wu,Hemming2015-yq,Prost2015-ug,Tugwell2019-nn} In particular, the design can be useful for assessing complex health interventions and for evaluating effectiveness of implementation.\cite{Mdege2011-mc,Keriel-Gascou2014-yh} There are, however, drawbacks to the design, and some of the benefits of the design may be achieved with parallel-arm CRT designs as well.\cite{Kotz2012-rc} Ethical arguments both for and against SW-CRTs have been made in various contexts, including arguments about the role of clinical equipoise.\cite{Brown2006-wu,Hargreaves2015-zp,Eyal2017-gf} And while the design may yield increased power over parallel-arm CRTs, this depends on both a large number of measurements over time and a statistically valid analysis method that controls for confounding of the treatment effect by time.\cite{Copas2015-so,Keriel-Gascou2014-yh,Kotz2012-rc,Hargreaves2015-zp,Mdege2014-pm,Viechtbauer2014-ox} Because more clusters are assigned to the intervention at later time points, the effect of time on the outcome must be accounted for in order to obtain unbiased or consistent treatment effects.\cite{Davey2015-no,Hargreaves2015-zp,Nickless2018-cm} Additionally, SW-CRTs with a relatively small number of clusters can be underpowered to detect effects, at least without making strong modeling assumptions.\cite{Thompson2017-or}

The most common method for analyzing SW-CRTs is the use of a linear or generalized linear mixed effects model. As described by Hussey and Hughes, this model can include a random intercept for each cluster and a fixed effect for time periods.\cite{Hussey2007-ay} This form of the model assumes that the additive effect for each time period is the same across clusters. A more general model proposed by Hooper et al.\ adds an independent random intercept for cluster-period.\cite{Hooper2016-ih} This, however, still assumes that the time trend does not vary systematically among clusters. In addition, both models require the specification of the distribution of these random intercepts. Misspecified random effects distributions can affect inference on the fixed effect estimators (i.e., the treatment effect estimator), although the effect on fixed effect estimates themselves is unclear and context-dependent.\cite{Thompson2017-or,Hartford2000-zz,Heagerty2001-vg,Agresti2004-qa,Litiere2007-an,Litiere2008-gw,McCulloch2011-jz} Finally, for the relatively small number of clusters in many SW-CRTs, asymptotic inference may not be appropriate.\cite{Wang2017-gr,Ji2017-cq}

Various methods have been proposed to remedy these issues. One approach, proposed by Wang and De Gruttola and by Ji et al., uses permutation tests to ensure nominal Type I Error and accurate inference, even for small numbers of clusters, as long as the effect estimate is unbiased.\cite{Wang2017-gr,Ji2017-cq} In the longitudinal context more generally, linear and generalized linear mixed effects models have been proposed that allow for flexible semi-parametric specification of the random effects distributions.\cite{Davidian1993-gb,Zhang2001-gg,Chen2002-zy} The operating characteristics of these different approaches to robust mixed effect model fitting have not been well-studied for SW-CRTs. Scott et al.\ have proposed the use of generalized estimating equations with finite-sample corrections to avoid the need to specify random effects distributions.\cite{Scott2017-wi} Thompson et al.\ recently proposed a non-parametric analysis method that uses within-period (``vertical'') comparisons.\cite{Thompson2018-so} They propose conducting inference by permutation tests as well to ensure nominal Type I Error and confidence interval coverage. They demonstrate through simulation that this method has no or low bias and nominal Type I Error and coverage.\cite{Thompson2018-so} Finally, Hughes et al.\ have proposed a robust inference method for SW-CRTs using vertical comparisons that gives a closed-form standard deviation estimate.\cite{Hughes2019-ud} However, both of these vertical methods can suffer from greatly reduced power compared to the parametric mixed effects models. Because SW-CRTs often have relatively few clusters, this can result in analyses that are highly underpowered to detect meaningful treatment effects.

In Section \ref{Methods}, we propose novel non-parametric methods to analyze SW-CRTs. In the first method, we, like Thompson et al., use within-period comparisons to avoid the problem of misspecification of time effects and cluster random intercept distributions. We incorporate the synthetic control procedure to match treated clusters with untreated clusters that are likely to be most similar. Synthetic controls are a relatively new but increasingly popular method for causal inference most common in the econometrics literature.\cite{Abadie2003-tp,Abadie2010-zy,Abadie2015-qi} The approach is generally used when there is one treated cluster and a ``donor pool'' of untreated clusters, with outcome data both before and after the treatment began. The method finds the linear combination of untreated clusters that most closely matches the pre-treatment outcomes of the treated cluster. The causal effect estimate is then some contrast of the treated cluster's post-treatment outcomes and that linear combination of the outcomes of the untreated clusters in the same period.\cite{Abadie2010-zy} We use this approach, somewhat akin to matching or covariate adjustment in parallel-arm clinical trials, to improve the power of the analysis.

The second method we propose uses the within-cluster between-period (``horizontal'') comparisons that are inherent in SW-CRT and other crossover designs to improve the power of non-parametric approaches.\cite{Hargreaves2015-zp,Thompson2017-or} This crossover method compares the between-period effect of clusters crossing over to that of clusters in the control arm in both periods (or in the intervention arm in both periods).

We also propose two ways of combining these methods. In one, we use synthetic controls to find the best-matching clusters for the crossover approach. In the other, we form an ensemble estimator by averaging the estimators obtained from the synthetic control and crossover methods.

In Section \ref{Results}, we compare by simulation the operating characteristics of these novel methods, the mixed effects models with both asymptotic and permutation-based inference, and the non-parametric within-period model, for both risk difference and odds ratio effect scales. We also apply these novel methods to a SW-CRT on the effects of diagnostic tests on tuberculosis outcomes reported by Trajman et al.,\cite{Trajman2015-mj} and compare the results to those for existing methods.

Finally, in Section \ref{Discussion}, we discuss the implications of these results for those designing and analyzing SW-CRTs. We also propose future research directions to better understand the relative performance of the methods considered here, as well as to better understand in which settings a SW-CRT may or may not be a reasonable design.

\section{Methods} \label{Methods}
In this section, we propose several novel methods of analysis for SW-CRTs: a synthetic control-based method, a crossover-based method, a combination method, and an ensemble method. These methods have flexible weighting schemes that allow the method to be tailored to particular situations. These methods do not rely on any particular distribution of the outcome data and can be used to estimate any causal contrast of interest.

\subsection{Setting and Notation}
Consider a SW-CRT with $I$ clusters with outcome measurements in each of $J$ periods. Denote by $Y_{i,j}$ the mean outcome for all $K$ measured individuals in cluster $i$ in period $j$ ($K$ can be fixed or vary by cluster-period). Let $X_{i,j}$ denote the intervention status of cluster $i$ in period $j$, with $X_{i,j} = 1$ indicating that the cluster is on intervention and $X_{i,j} = 0$ indicating that the cluster is on control. For each period $j$, let $I_{0,j} = \{i: X_{i,j} = X_{i,j-1} = 0\}$, $I_{1,j} = \{i: X_{i,j} = 1, X_{i,j-1} = 0\}$, and $I_{2,j} = \{i: X_{i,j} = X_{i,j-1} = 1\}$, the set of clusters on control in both periods $j$ and $j-1$, crossing over in period $j$, and on intervention in both periods $j$ and $j-1$, respectively. Denote the number of clusters in each of these sets by $n_{0,j}$, $n_{1,j}$, and $n_{2,j}$, respectively. We assume that each cluster only crosses over once; once a cluster is on intervention, it remains so for the rest of the periods under study. We assume throughout that the order of crossover is determined randomly. For each cluster $i$, let $j_i$ be the last period for which it is on control (define $j_i = 0$ if cluster $i$ is on intervention in period 1); then $j_i + 1$ is the first period on intervention for cluster $i$. For any period $j$, denote by $Y_{\cdot j}$ the expected value of the outcome (marginal across clusters) in period $j$ in the absence of intervention. That is, $Y_{\cdot j} = E[Y_{i,j}|X_{i,j} = 0]$ for any cluster $i$.

Let $g(y_1,y_2)$ be the contrast of interest. For example, for binary outcomes, the risk difference is given by $g(y_1,y_2) = y_1 - y_2$ and the log odds ratio is given by $g(y_1,y_2) = \log \left(\frac{y_1/(1-y_1)}{y_2/(1-y_2)} \right)$. Although binary outcomes are more common in SW-CRTs,\cite{Davey2015-no} contrasts on continuous and count outcomes may be specified as well.

\subsection{Existing methods for comparison} \label{Existing}
We compare the performance of these novel analysis methods with that of three current approaches for the analysis of SW-CRTs: two mixed effects model specifications (each with both asymptotic and exact inference) and the non-parametric within-period method.

First, we consider the commonly-used mixed effects model with a random intercept for cluster and fixed effects for time:
\begin{equation}
    h(E[Y_{i,j}]) = \mu + \alpha_i + \theta_j + X_{i,j} \beta, \label{MEM}
\end{equation}
where $h$ is the link function, $\mu$ is the global mean under control in period 1, $\alpha_i \overset{iid}{\sim} N(0,\tau^2)$, and $\theta_1 = 0$ for identifiability.\cite{Hussey2007-ay} Generalized linear mixed model theory can be used for asymptotic inference, and permutation tests (and associated confidence intervals) can be used for exact inference with this model.\cite{Wang2017-gr,Ji2017-cq}

Second, we consider the mixed effects model with an additional cluster-period random intercept:
\begin{equation}
    h(E[Y_{i,j}]) = \mu + \alpha_i + \theta_j + \eta_{i,j} + X_{i,j} \beta, \label{CPI}
\end{equation}
where $h$ is the link function, $\mu$ is the global mean under control in period 1, $\alpha_i \overset{iid}{\sim} N(0,\tau^2)$, $\theta_1 = 0$ for identifiability, $\eta_{i,j} \overset{iid}{\sim} N(0,\nu^2)$, and $\eta_{i,j} \indep \alpha_i$ for all $i,j$.\cite{Hooper2016-ih} Inference can proceed on an asymptotic or exact basis as above.\cite{Wang2017-gr}

Third, we consider the non-parametric within-period method. In this method, for each period $j$ where there are clusters on control and on intervention, a period-specific effect estimate is calculated by comparing the mean outcome of clusters on intervention ($i \in I_{1,j} \cup I_{2,j}$) to the mean outcome of clusters on control ($i \in I_{0,j}$):
\begin{equation}
    \hat{\beta}_j = g \left( \frac{\sum_{i \in I_{1,j} \cup I_{2,j}} Y_{i,j}}{n_{1,j} + n_{2,j}} , ~ \frac{\sum_{i \in I_{0,j}} Y_{i,j}}{n_{0,j}} \right). \label{NPWP1}
\end{equation}
The period-specific effect estimates are combined using an inverse-variance weighted average to obtain an overall estimated intervention effect:
\begin{equation}
    \hat{\beta} = \sum_{j: ~ 0 < n_{0,j} < I} \frac{w_j}{w} \hat{\beta}_j, \label{NPWP2}
\end{equation}
where $w_j = \left[ \left( \frac{(n_{0,j} - 1) s_{0,j}^2 + (n_{1,j} + n_{2,j} - 1) s_{1,j}^2}{J - 2} \right) \left( \frac{1}{n_{0,j}} + \frac{1}{n_{1,j} + n_{2,j}} \right)\right]^{-1}$, $w = \sum_{j: 0 < n_{0,j} < I} w_j$, and $s_{0,j}^2$ and $s_{1,j}^2$ are the empirical variances of the $Y_{i,j}$ values for clusters on control and on intervention, respectively, for period $j$.\cite{Thompson2018-so} A schematic representation of this estimation method is given in Figure \ref{Fig1}a. Exact inference can proceed using permutation tests.\cite{Thompson2018-so}

\subsection{Synthetic control method}
Our first proposed method uses the synthetic control procedure developed by Abadie et al.\ to estimate the effect of treatment for each intervention cluster-period.\cite{Abadie2010-zy} Similar to the non-parametric within-period method proposed by Thompson et al., this novel method constructs vertical comparisons and then uses a weighted average of these comparisons as an overall effect estimate.\cite{Thompson2018-so}

\begin{enumerate}
    \item For each period $j$ where there are clusters on control and on intervention, for each cluster $i$ on intervention ($i \in I_{1,j} \cup I_{2,j}$), we construct a synthetic control estimator $Z_{i,j}$, using the procedure outlined by Abadie et al.\cite{Abadie2010-zy} The synthetic control for cluster $i$ in period $j$ is a weighted average of the outcomes of the clusters on control in period $j$: $Z_{i,j} = \sum_{n=1}^{n_{0,j}} v_{i,j,n} Y_{m_n,j}$, where $m_1,\ldots,m_{n_{0,j}}$ are the clusters on control in period $j$. The weights, $v_{i,j,n}$, are selected by the synthetic control procedure to minimize the mean squared difference between the synthetic control for periods where cluster $i$ was on control and the true outcome for cluster $i$ in that period subject to the constraints that the weights are nonnegative and sum to one. That is, they minimize:
    \begin{equation} \label{MSPEdef}
        MSPE_{i,j} = \sum_{j':~X_{i,j'} = 0} \left(Y_{i,j'} - \sum_{n:~X_{m_n,j'} = 0} v_{i,j,n} Y_{m_n,j'} \right)^2.
    \end{equation}
See the proof of Theorem \ref{SCThm1} in Appendix \ref{app1} for details. If specific cluster-level covariates are known, they can be included in estimation of the synthetic control as well.\cite{Abadie2010-zy} When the synthetic control procedure does not converge or there are no pre-intervention periods for this cluster, the unweighted mean of the outcomes of clusters on control in period $j$ is used as $Z_{i,j}$. In these cases, the period-specific effect estimator is the same as that for the non-parametric within-period method described above, and so the properties of that estimator hold.
    \item For each intervention cluster $i$, for each period $j$ where $X_{i,j} = 1$ and $n_{0,j} \ge 1$, we construct an estimator:
    \begin{equation}
        \hat{\beta}_{i,j} = g(Y_{i,j},Z_{i,j}) \label{SC1}
    \end{equation}
    \item We find an overall estimator via a weighted average of these cluster-period-specific estimators:
    \begin{equation}
        \hat{\beta} = \sum_{j: ~ 0 < n_{0,j} < I}  ~ \sum_{i \in I_{1,j} \cup I_{2,j}} \frac{w_{i,j}}{w} \hat{\beta}_{i,j}, \label{SC2}
    \end{equation}
    where $w_{i,j} \ge 0$ and $w = \sum_{j: ~ 0 < n_{0,j} < I} ~ \sum_{i \in I_{1,j} \cup I_{2,j}} w_{i,j}$.
\end{enumerate}
 A schematic representation of this estimation method is given in Figure \ref{Fig1}b.

\subsubsection{Inferential procedure} \label{SCinference}
A permutation test can be used for exact inference, as for mixed effects models and the non-parametric within-period method.\cite{Wang2017-gr,Thompson2018-so} The standard permutation test approach is used: $P$ random permutations of the crossover order are generated and an estimate of the treatment effect is obtained from each permutation using the estimation procedure described above. The true estimate $\hat{\beta}$ is compared to these estimates and the $p$-value for the null hypothesis of no effect of treatment is given by the proportion of the $P$ estimates for which $\abs*{\hat{\beta}^p} \ge \abs*{\hat{\beta}}$. This approach matches inferential methods for synthetic control estimators, which rely on permuting the treatment indicator of units and estimating placebo synthetic control estimators to derive the null distribution of the estimator.\cite{Abadie2010-zy,Abadie2015-qi,Gautier2009-pn} To obtain confidence intervals, the permutation test can be inverted in the standard way.

\subsubsection{Computation}
This procedure is implemented in the \texttt{R} code available at:\\
\noindent\url{https://github.com/leekshaffer/SW-CRT-analysis}.\\
\noindent This implementation uses the \texttt{synth} function from the \texttt{Synth} package to obtain the synthetic control weights $v_{i,j,n}$.\cite{Abadie2011-ho}

\subsubsection{Properties of the estimator}
In a SW-CRT with a randomized order of crossover, the synthetic control estimator $Z_{i,j}$ is an unbiased estimate of the expected outcome under control, $Y_{\cdot j}$, if the underlying cluster-level outcome distribution is symmetric around some global mean outcome vector across periods; see Theorem \ref{SCThm1} in Appendix \ref{app1}. If the individual-level outcomes have cluster-conditional expectations that are symmetric around a global mean vector, the estimator is asymptotically unbiased as the number of subjects with measured outcomes per cluster increases; see Corollary \ref{SCCor1}. Thus, for any weights independent of the outcomes, the SC estimator using the risk difference is unbiased or asymptotically unbiased under these conditions if there is a common risk difference across cluster-periods. See Theorem \ref{SCThm2} and Corollary \ref{SCCor2}. Note that all of the assumptions of Corollary \ref{SCCor2} are satisfied under the mixed effects models described in Section \ref{Existing} with an identity link function as long as the random effects are independent and identically distributed following a normal (or any other symmetric) distribution.

\subsubsection{Selecting weights}
The weights for combining the cluster-period-specific estimators, $w_{i,j}$, can be selected on the basis of two primary goals: (1) minimizing the variance of the overall estimator $\hat{\beta}$; or (2) estimating a specific causal contrast when treatment effects may not be equal across clusters and time periods.

For the first goal, a natural approach is to follow the synthetic control literature on evaluating the accuracy of the synthetic control estimator or combining multiple synthetic control estimators by using the inverse of the mean squared prediction error (MSPE) values for each synthetic control estimator.\cite{Dube2015-pl,Donohue2018-qb,Powell2017-qf} For cluster $i$ in period $j$, the MSPE of the synthetic control fit is given by equation (\ref{MSPEdef}). In the SW-CRT setting, however, the MSPE values are not directly comparable as different synthetic control estimators have a different number of pre-intervention periods that contribute. By contrast, the MSPE values will be comparable for intervention clusters that begin treatment in the same period, as these clusters will always have the same number of pre-intervention periods, regardless of which of their intervention periods are being examined. We therefore propose to weight the $\hat{\beta}_{i,j}$ values by the inverse-MSPE within each set of intervention clusters that cross over in the same period, and then weight across these sets equally. That is, for each $(i,j)$ such that $X_{i,j} = 1$, set weights proportional to:
\begin{equation}
    w_{i,j} = \frac{MSPE_{i,j}^{-1}}{\sum_{(i',j'): i' \in I_{1,j}, X_{i',j'} = 1} MSPE_{i',j'}^{-1}}, \label{SCweights}
\end{equation}
where $MSPE_{i,j}$ is the MSPE of the synthetic control estimation procedure that produces $Z_{i,j}$.

For the second goal of weighting, the weighting approach will depend on the causal estimand of interest. If, for example, investigators are only interested in the effect of intervention in the first period of its introduction to any cluster, they may select as weights:
\begin{equation}
    w_{i,j} = \begin{cases} 1, &i \in I_{1,j} \\ 0, &\mbox{otherwise} \end{cases},
\end{equation}
that is, only using the $\hat{\beta}_{i,j}$ estimates for the first period on intervention for each cluster. We do not present results on this approach here, but further research is needed to understand the causal estimands that may be of interest when the treatment effect cannot be assumed to be constant across clusters and periods. In this way, the weights also aid interpretability of the estimator, as it is clear which clusters and periods are considered and how much weight is given to each.

\subsection{Crossover method}
The second novel method seeks to improve on the power of the non-parametric within-period method by incorporating horizontal comparisons at the time of crossover. There is substantial literature on the value of within-subject analysis methods and methods combining within- and between-subject analyses for individual randomized crossover trials, especially in the absence of anticipation, lag, or carryover effects of treatment.\cite{Everitt1995-cj,Jones1996-by,Omar1999-kf,Fitzmaurice2011-ab} The method we propose for SW-CRTs compares the mean contrast between the last control period and the first intervention period for each cluster crossing over from one period to the next to the mean contrast in those same periods among clusters on control in both periods. Since standard mixed effects models give a large weight to horizontal comparisons,\cite{Thompson2017-or} the crossover approach may recover some of the power of mixed effects models while preserving the robustness of non-parametric estimation. The procedure is as follows:
\begin{enumerate}
    \item For each cluster $i$ and period $j > 1$, define $D_{i,j} \equiv g(Y_{i,j},Y_{i,j-1})$, the contrast in outcomes in cluster $i$ between consecutive periods. E.g., for a risk difference analysis, $D_{i,j} = Y_{i,j} - Y_{i,j-1}$, the difference in outcomes between consecutive periods.
    \item For each period $j > 1$ with clusters on both intervention and control, estimate the treatment effect for period $j$ by:
    \begin{equation}
        \hat{\beta}_j = \sum_{i \in I_{1,j}} \frac{D_{i,j}}{n_{1,j}} - \sum_{i \in I_{0,j}} \frac{D_{i,j}}{n_{0,j}}. \label{CO1}
    \end{equation}
    If the treatment effect is assumed to be constant across time, an alternate estimator is given by:
    \begin{equation}
        \tilde{\beta}_j = \sum_{i \in I_{1,j}} \frac{D_{i,j}}{n_{1,j}} - \sum_{i \in I_{0,j} \cup I_{2,j}} \frac{D_{i,j}}{n_{0,j} + n_{2,j}}. \label{CO2}
    \end{equation}
    This alternative compares the change in outcome for the clusters which cross over to the change for clusters which either remain on control in both periods or remain on intervention in both periods.
    \item Construct an overall estimator with a weighted average of period-specific estimators:
    \begin{equation}
        \hat{\beta} = \sum_{j > 1:~0 < n_{0,j},n_{1,j} < I} \frac{w_j}{w} \hat{\beta}_j, \qquad \mbox{ or } \qquad \tilde{\beta} = \sum_{j > 1:~0 < n_{1,j} < I} \frac{w'_j}{w'} \tilde{\beta}_j, \label{CO3}
    \end{equation}
    where $w_j,w'_j \ge 0$ and $w = \sum_{j > 1:~0 < n_{0,j},n_{1,j} < I} w_j$ and $w' = \sum_{j > 1:~0 < n_{1,j} < I} w'_j$.
\end{enumerate}
 A schematic representation of this estimation method is given in Figure \ref{Fig1}c.

\subsubsection{Inferential procedure}
A permutation test can again be used for hypothesis testing and to obtain confidence intervals. The procedure is the same as the inferential procedure for the synthetic control estimator, detailed in Section \ref{SCinference}.

\subsubsection{Computation}
This procedure is implemented in the \texttt{R} code available at:\\
\noindent\url{https://github.com/leekshaffer/SW-CRT-analysis}.

\subsubsection{Properties of the estimator}
For the risk difference, $g(y_1,y_2) = y_1-y_2$, any of these crossover estimates are unbiased estimates of the true risk difference $\beta$, under a randomized crossover order and the assumption of a constant $\beta$ across clusters and periods. See Theorem \ref{COThm1} in Appendix \ref{app1}. The controls-only estimator $\hat{\beta}$ is unbiased for the intervention effect in the first period on intervention if that effect is constant across clusters. See Corollary \ref{COCor1} in Appendix \ref{app1}.

\subsubsection{Selecting weights} \label{COweights}
As for the synthetic control estimator, the weights can be selected either to minimize the variance of the overall estimator or to ensure proper estimation of a specific causal estimand. For the latter, again, this will depend on the specific estimand of interest, e.g., to match a target population of clusters.

To minimize the variance of the overall estimator, the weights may depend on the variance of the cluster-level outcome for each cluster-period. If all of these variances are assumed to be the same (i.e., all have the same subject-level variance and the same number of subjects), then the weight should depend only on the number of clusters in each treatment condition in that period. That is, we weight each estimator $\hat{\beta}_j$ by $w_j = \left(\frac{1}{n_{0,j}} + \frac{1}{n_{1,j}} \right)^{-1}$, the harmonic mean of the number of clusters used to estimate the consecutive-period control effect and the number of clusters used to estimate the crossover effect. For $\tilde{\beta}_j$, where the clusters which were on intervention in both periods $j$ and $j-1$ are used as control crossovers as well, we weight by $w'_j = \left(\frac{1}{n_{0,j}+n_{2,j}} + \frac{1}{n_{1,j}} \right)^{-1}$. Note that when the same number of clusters cross over at each time point, $w'_j$ is constant across $j$, while $w_j$ decreases as $j$ increases.

\subsection{Crossover-synthetic control method}
A third potential method combines these two approaches by finding a synthetic control for the horizontal crossover contrast and comparing the intervention horizontal contrast to this synthetic control. This may combine the benefits of using horizontal comparisons with the benefits of synthetic control-based matching between clusters.

\begin{enumerate}
    \item For each cluster $i$ and period $j > 1$, define $D_{i,j} \equiv g(Y_{i,j},Y_{i,j-1})$, the contrast in outcomes in cluster $i$ between consecutive periods. E.g., for a risk difference analysis, $D_{i,j} = Y_{i,j} - Y_{i,j-1}$, the difference in outcomes between consecutive periods.
     \item For each cluster $i \in I^*  \equiv \{ i: j_i + 1 > 1 \cap n_{0,j_i + 1} > 0 \}$, the set of clusters that begin intervention in a period after period 1 that has clusters on control, construct a synthetic control horizontal contrast estimator $C_{i}$, using the procedure outlined by Abadie et al.\cite{Abadie2010-zy} For cluster $i$, which crosses over in period $j$, the synthetic control is a weighted average of the horizontal contrasts of the clusters on control in both periods $j-1$ and $j$: $C_{i} = \sum_{n=1}^{n_{0,j}} v_{i,n} D_{m_n,j}$, where $m_1,\ldots,m_{n_{0,j}}$ are the clusters on control in both periods. The weights are selected by the synthetic control procedure to minimize the mean squared difference between the synthetic control for periods $j'$ where cluster $i \in I_{0,j'}$ and the true horizontal contrast for cluster $i$ in that period subject to the constraints that the weights are nonnegative and sum to one. When the synthetic control procedure does not converge or there are no pre-crossover consecutive period contrasts for this cluster, the unweighted mean of the values $D_{i',j}$ for $i' \in I_{0,j}$ is used as $C_i$.
    \item For each cluster $i \in I^*$, we construct an estimator using its crossover effect in period $j$:
    \begin{equation}
    \hat{\beta}_i = D_{i,j} - C_{i}. \label{COSC1}
    \end{equation}
    \item We find an overall estimator via a weighted average of these cluster-specific estimators:
    \begin{equation}
    \hat{\beta} = \sum_{i \in I^*} \frac{w_i}{w} \hat{\beta}_i, \label{COSC2}
    \end{equation}
    where $w = \sum_{i \in I^*} w_i$.
\end{enumerate}
 A schematic representation of this estimation method is given in Figure \ref{Fig1}d. Note that this procedure is the same as that for the synthetic control method, but using $D_{i,j}$ as the ``outcomes'' in place of $Y_{i,j}$.

\subsubsection{Inferential procedure}
The inferential procedure for the synthetic control estimator, detailed in Section \ref{SCinference}, can again be used here for exact inference.

\subsubsection{Computation}
This procedure is implemented in the \texttt{R} code available at:\\
\noindent\url{https://github.com/leekshaffer/SW-CRT-analysis}.\\
\noindent This implementation uses the \texttt{synth} function from the \texttt{Synth} package to obtain the synthetic control weights $v_{i,n}$.\cite{Abadie2011-ho}

\subsubsection{Selecting weights}
As for the synthetic control estimator, a natural approach to minimize the variance of the overall estimator is to use weights inversely proportional to the MSPE of the synthetic control fits. Again, though, because the number of pre-crossover periods varies, these are only comparable among clusters which cross over in the same period. So we propose to weight the $\hat{\beta}_i$ values by the inverse-MSPE within each set of intervention clusters that cross over in the same period, and then weight across these sets equally. That is, for each $i$, set:
\begin{equation}
    w_{i} = \frac{MSPE_{i}^{-1}}{\sum_{i' \in I_{1,j_i + 1}} MSPE_{i'}^{-1}}, \label{COSCweights}
\end{equation}
where $MSPE_{i}$ is the MSPE of the synthetic control estimation procedure that produces $Z_{i}$.

\subsection{Ensemble method}
Finally, we consider an ensemble method that averages the estimators of previously-described methods. For any unbiased and/or consistent estimators, a weighted average of those estimators with weights that do not depend on the data will also be unbiased/consistent. If the covariance of the estimators is small enough compared to the variances, it may also reduce the variance of the estimator. In Appendix \ref{app2}, we derive the variances and covariance of the non-parametric within-period and crossover estimators under a simplified data-generating process. We then demonstrate that in a simplified data-generating setting, when the difference in the mean outcome between clusters is relatively small, compared to the variability within clusters, a simple mean of the non-parametric within-period estimator and the crossover estimator has a lower variance than either estimator on its own.

There is no known analytic formula for the variance of the synthetic control estimator, although we expect (and simulation results presented below suggest) the synthetic control estimator to have lower variance than the non-parametric within-period estimator when the synthetic control matching performs well. Since the synthetic control and non-parametric within-period estimators are both vertical methods of analysis, we consider here an ensemble estimator that is a simple mean of the synthetic control estimator and the crossover estimator. That is,
\begin{equation}
    \hat{\beta}^{ENS} = \frac{1}{2} \hat{\beta}^{SC} + \frac{1}{2} \hat{\beta}^{CO}, \label{ENS}
\end{equation}
where $\hat{\beta}^{SC}$ is a synthetic control estimator and $\hat{\beta}^{CO}$ is a crossover estimator.

Note that many other ensemble estimators could be constructed, using different analysis methods and different weights. In addition, within-period ensembles may be constructed and then combined across periods (e.g., take the average of the SC estimators within each period $j$ and average those with the CO estimator for period $j$, and then combine across periods to target a specific causal estimand). We use this simple version here to demonstrate the concept of the ensemble method and show its potential to improve power, but different ensembles will have different operating characteristics and may perform better or worse depending on the setting.

\subsubsection{Inferential procedure}
The inferential procedure for the synthetic control estimator, detailed in Section \ref{SCinference}, can again be used here for exact inference.

\subsubsection{Computation}
This procedure is implemented in the \texttt{R} code available at:\\
\noindent\url{https://github.com/leekshaffer/SW-CRT-analysis}.\\
\noindent Other ensemble methods can be constructed by altering the weights and estimators used; a generic function is provided for this purpose in the \texttt{R} code.

\section{Results} \label{Results}

We compare the performance of these novel methods with the existing methods under two simulation settings: the first using the risk difference contrast, $g(y_1,y_2) = y_1 - y_2$, and the second using the log odds ratio contrast, $g(y_1,y_2) = \log \left( \frac{y_1/(1-y_1)}{y_2/(1-y_2)} \right)$. As SW-CRTs most commonly have binary outcomes, we consider binary outcomes here; the methods, however, also work for continuous outcomes. Throughout we denote the methods considered as follows:
\begin{itemize}
    \item MEM denotes the mixed-effects model defined in equation (\ref{MEM}).
    \item CPI denotes the mixed-effects model with a cluster-period random effect defined in equation (\ref{CPI}).
    \item NPWP denotes the non-parametric within-period method defined in equations (\ref{NPWP1}) and (\ref{NPWP2}).
    \item SC-1 denotes the synthetic control method defined in equations (\ref{SC1}) and (\ref{SC2}), with equal weights across cluster-period estimators.
    \item SC-2 denotes the synthetic control method with inverse-MSPE weights as defined in equation (\ref{SCweights}). In this case, there is only one cluster crossing over per period, so the estimators are weighted by inverse-MSPE within each target cluster, and then equally weighted across clusters.
    \item CO-1 denotes the crossover method defined in equations (\ref{CO1}) and (\ref{CO3}), using comparison data only from control clusters, with equal weights.
    \item CO-2 denotes  the crossover method defined in equations (\ref{CO1}) and (\ref{CO3}), using comparison data only from control clusters, with weights proportional to the harmonic mean of the number of control and crossover clusters.
    \item CO-3 denotes the crossover method defined in equations (\ref{CO2}) and (\ref{CO3}), using comparison data from both control clusters and intervention clusters, with equal weights.
    \item COSC-1 denotes the crossover-synthetic control method defined in equations (\ref{COSC1}) and (\ref{COSC2}) with equal weights across cluster-specific estimators.
    \item COSC-2 denotes the crossover-synthetic control method with inverse-MSPE weights defined in equation (\ref{COSCweights}).
    \item ENS denotes the ensemble method defined in equation (\ref{ENS}), using a simple mean of SC-2 and CO-2.
\end{itemize}
All inference is based on exact permutation tests, except for asymptotic inference using the MEM and CPI models, which is denoted by MEM-a and CPI-a. All permutation tests were conducted with $500$ randomly-sampled permutations of the crossover order.

\subsection{Simulation 1: risk difference}

\subsubsection{Setting and parameters}
We consider a setting where the risk difference is the contrast of interest. There are $I=7$ clusters and $J=8$ time periods, with one cluster beginning treatment in each of periods $2$ through $8$. At each cluster-period, $K=100$ individuals are sampled. The data are generated from a mixed effects model similar to that in equation (\ref{CPI}) with $\mu = 0.30$ and $\tau = 0.06$, with an identity link. We consider four scenarios:
\begin{enumerate}
    \item Fixed time effects $\bm{\theta} = \bm{\theta}_1 \equiv (0,0.08,0.18,0.29,0.30,0.27,0.20,0.13)$ and no cluster-period effect ($\nu = 0$). The MEM model is correctly specified in this case.
    \item Fixed time effects $\bm{\theta} = \bm{\theta}_1$ and cluster-period effect with $\nu = 0.01$. The CPI model is correctly specified in this case.
    \item Equal probability of each cluster having either the time effects $\bm{\theta}_1$ or \\ $\bm{\theta}_2 \equiv (0,0.02,0.03,0.07,0.13,0.19,0.27,0.3)$. No cluster-period effect ($\nu = 0$). Neither MEM nor CPI is correctly specified in this case.
    \item Equal probability of each cluster having either the time effects $\bm{\theta}_1$ or $\bm{\theta}_2$. Cluster-period effect with $\nu = 0.01$. Neither MEM nor CPI is correctly specified in this case.
\end{enumerate}
Note that all scenarios satisfy the conditions of Corollaries \ref{SCCor1} and \ref{SCCor2} and Theorem \ref{COThm1} in Appendix \ref{app1}, so the SC-1 estimator is asymptotically unbiased and the CO estimators are unbiased. Since SC-2 does not have equal weights, it does not meet the conditions of Theorem \ref{SCThm2} or Corollary \ref{SCCor2}, so we cannot guarantee it is asymptotically unbiased. For scenarios 1 and 2, the global mean vector is $\bm{Y}_{\cdot J} = \mu + \bm{\theta}_1$. For scenarios 3 and 4, the global mean vector is $\bm{Y}_{\cdot J} = \mu + \frac{\bm{\theta}_1 + \bm{\theta}_2}{2}$.

For each scenario, 1,000 data sets were simulated for each of three treatment effects: $\beta = -0.2$, $\beta = -0.1$, and $\beta = 0$. We do not present the results for the strong treatment effect ($\beta = -0.2$) here, as they are very similar to those for the moderate treatment effect ($\beta = -0.1$), but with such high power (all methods except NPWP over 90\% in all scenarios) that it is hard to distinguish differences. A representative plot of cluster outcomes for each of the four scenarios with no treatment effect is given in Figure 2. If the probability of outcome for any cluster-period was less than 0 or greater than 1, it was truncated to 0 or 1, respectively. Code to generate and analyze the simulated data is available at:\\
\noindent\url{https://github.com/leekshaffer/SW-CRT-analysis}.

For each of the twelve scenarios, each data set was analyzed using the following methods: MEM, CPI, NPWP, SC-1, SC-2, CO-1, CO-2, CO-3, COSC, and ENS. Note that since only one cluster crosses over in each period, COSC-2 is equivalent to COSC-1; this is denoted COSC. The weights for SC-2 are calculated with inverse-MSPE weighting only within each intervention cluster but still differ from SC-1, which is equally weighted among all cluster-periods.

\subsubsection{Simulation results}
Figure \ref{Fig3} shows the mean effect estimate and 1/2-standard deviation of the effect estimates across the 1,000 simulations for each method for each scenario. The two subplots each show the scenarios for one treatment effect, with Scenario 1 at the top and Scenario 4 at the bottom of each plot. For all of the settings, all of the methods exhibit little overall bias, with the average estimate for each method within $0.005$ of the true effect in each scenario. As expected given that all four scenarios meet the assumptions of Corollary \ref{SCCor2} and Theorem \ref{COThm1}, SC-1, CO-1, and CO-2 appear to be unbiased in the simulations. As noted by Thompson et al.,\cite{Thompson2018-so} the nonparametric estimator NPWP must also be unbiased in all scenarios. Despite the misspecification of MEM and CPI in scenarios 3 and 4, they nonetheless result in unbiased estimators, albeit with wider empirical variance. And SC-2 appears unbiased in these simulations as well, despite its not meeting the conditions of Corollary \ref{SCCor2}. The variability of the effect estimates varies a great deal by method, with the MEM and CPI methods exhibiting the least variability when the time effects do not vary, and the CO and ENS methods exhibiting the least variability when the time effects do vary. Figure \ref{Fig4} shows the Type I Error (probability of finding a significant treatment effect when $\beta = 0$) for each analysis method under each scenario. All of the methods are close to the nominal Type I Error of 5\% with the exception of asymptotic inference for the MEM and CPI methods when the time effects vary. All of the exact inference methods also achieve the nominal coverage in 95\% confidence intervals, as shown in Figure \ref{Fig5}. When the time effects do not vary, the MEM and CPI methods with asymptotic inference also achieve or nearly achieve the nominal coverage; when the time effects do vary, they both have less than 90\% coverage.

Figure \ref{Fig6} shows the power (estimated probability of finding a significant treatment effect at the 5\% significance level) for each analysis method under each scenario for the moderate treatment effect ($\beta = -0.1$). The asymptotic inference MEM and CPI results are not shown when the time effects vary as they have inflated Type I Error. The MEM and CPI (exact or asymptotic inference) methods have the highest power when the time effects do not vary. When the time effects do vary, the CO and ENS methods perform the best among the exact inference methods, followed by the COSC, SC, MEM, and CPI methods. The NPWP method has the least power. As expected with weights selected to reduce variance, CO-2 outperforms CO-1 and SC-2 outperforms SC-1. These differences, however, are smaller than the differences between classes of methods.

\subsection{Simulation 2: odds ratio}

\subsubsection{Setting and parameters}
We consider now a setting where the odds ratio is the contrast of interest. There are again $I=7$ clusters and $J=8$ time periods, with one cluster beginning treatment in each of periods $2$ through $8$. At each cluster-period, $K=100$ individuals are sampled. The data are generated from a mixed effects model similar to that in equation (\ref{CPI}) with $\mu = \mbox{logit}(0.30)$ and $\tau = 0.1$, with a logit link. We consider four scenarios:
\begin{enumerate}
    \item Fixed time effects $\bm{\theta} = \bm{\theta}_1 \equiv \log(1,1.43,2.15,3.36,3.50,3.09,2.33,1.76)$ and no cluster-period effect ($\nu = 0$). The MEM model is correctly specified in this case.
    \item Fixed time effects $\bm{\theta} = \bm{\theta}_1$ and cluster-period effect with $\nu = 0.01$. The CPI model is correctly specified in this case.
    \item Equal probability of each cluster having either the time effects $\bm{\theta}_1$ or \\ $\bm{\theta}_2 \equiv \log(1,1.10,1.15,1.37,1.76,2.24,3.09,3.50)$. No cluster-period effect ($\nu = 0$). Neither MEM nor CPI is correctly specified in this case.
    \item Equal probability of each cluster having either the time effects $\bm{\theta}_1$ or $\bm{\theta}_2$. Cluster-period effect with $\nu = 0.01$. Neither MEM nor CPI is correctly specified in this case.
\end{enumerate}
For each scenario, 1,000 data sets were simulated for each of three treatment effects: $\beta = \log(0.50) \approx -0.693$, $\beta = \log(0.66) \approx -0.416$, and $\beta = \log(1) = 0$. We do not present the results for the strong treatment effect ($\beta = \log(0.50) \approx -0.693$) here, as they are very similar to those for the moderate treatment effect ($\beta = \log(0.66) \approx -0.416$), but with such high power (all methods over 90\% in all scenarios) as to make comparisons difficult. These parameters were chosen to give similar outcome probabilities under control as in Simulation 1, but specified on the log-odds ratio scale. A representative plot of cluster outcomes for each of the four scenarios with no treatment effect is given in Figure \ref{Fig7}. Code to generate and analyze the simulated data is available at:\\
\noindent\url{https://github.com/leekshaffer/SW-CRT-analysis}.

For each of the twelve scenarios, each data set was analyzed using the same set of methods as in the previous section.

\subsubsection{Simulation results}
Figure \ref{Fig8} shows the mean effect estimate and 1/2-standard deviation of the effect estimates across the 1,000 simulations for each method for each scenario. The three subplots each show the scenarios for one treatment effect, with Scenario 1 at the top and Scenario 4 at the bottom of each plot. For all of the settings, all of the methods exhibit little overall bias, with the average estimate for each method within $0.01$ of the true effect in each scenario. The variability of the effect estimates varies a great deal by method, however, with MEM and CPI methods exhibiting the least variability when the time effects do not vary, and the CO, SC, and ENS methods exhibiting the least variability when the time effects do vary. Figure \ref{Fig9} shows the Type I Error (probability of finding a significant treatment effect when $\beta = 0$) for each analysis method under each scenario. All of the methods are close to the nominal Type I Error of 5\% with the exception of asymptotic inference for the MEM and CPI methods when the time effects vary. All of the exact inference methods also achieve the nominal coverage in 95\% confidence intervals, as shown in Figure \ref{Fig10}. When the time effects do not vary, the MEM and CPI methods with asymptotic inference also achieve or nearly achieve the nominal coverage; when the time effects do vary, they both have less than 90\% coverage.

Figure \ref{Fig11} shows the power for each analysis method under each scenario for the moderate treatment effect ($\beta = \log(0.66) \approx -0.416$). The asymptotic inference MEM and CPI results are not shown when the time effects vary as they have inflated Type I Error. The MEM and CPI (exact or asymptotic inference) methods have the highest power when the time effects do not vary, but there is relatively little loss of power for the ENS, SC, and CO-3 methods. When the time effects do vary, the ENS method performs the best among the exact inference methods, followed by the CO-3 method, the SC and other CO methods, and then the COSC method. The NPWP and exact inference MEM and CPI methods have the least power.

These results are largely similar to those seen in Simulation 1. This suggests that the contrast of interest is less important to the relative performance of these methods than the underlying distribution of the data.

\subsection{Variance and covariance of estimators}
To assess the variability between methods for a given instance of analysis, we determined the pairwise covariance for each pair of methods across the simulated settings. Within each data-generating setting, we found the covariance between methods across all 1,000 simulations. As a representative example of these covariances, we take scenario 4, the scenario with the most complex data-generating process, under the null hypothesis of no treatment effect, for both Simulation 1 and Simulation 2. The covariances are displayed in the heat map shown in Figure \ref{Fig12} for the risk difference (Simulation 1) and in Figure \ref{Fig13} for the odds ratio (Simulation 2). Note that the empirical variances for each method across the 1,000 simulations are on the diagonal of the figure.

These results indicate rather high correlations within classes of methods; that is, the mixed effects model methods are highly correlated with one another, the synthetic control methods are highly correlated with one another, and the crossover methods (including COSC) are highly correlated with one another. NPWP is correlated with the mixed effects model methods and the SC methods; it has high covariance with these methods largely because of its high variance. 

The least covariance occurs between any mixed effects or vertical (NPWP or SC) method and any of the CO-based methods. This suggests that using an ensemble method combining an SC method and a CO method is indeed valuable here, as, for example, the covariance of SC-2 and CO-2 is notably smaller than the variance of SC-2 and the variance of CO-2. This corresponds with the increased power for the ENS method compared with SC-2 and CO-2 seen in the previous sections. The relatively low, equal covariance of ENS with the other methods suggests there is little to be gained in this setting by more complex ensemble methods.

\subsection{Application to tuberculosis SW-CRT} \label{Example}
We applied the methods discussed here to a SW-CRT that assessed the effect of a tuberculosis (TB) diagnostic test on reducing unsuccessful (non-cure) outcomes of adults on TB treatment.\cite{Trajman2015-mj} Note that this is the same trial re-analyzed by Thompson et al.\ using the within-period methods they proposed.\cite{Thompson2018-so}

\subsubsection{Trial description}
In this study, Trajman et al.\ conducted a SW-CRT in fourteen laboratories in the Brazilian cities of Rio de Janeiro and Manaus. While in the control arm, the labs diagnosed TB using two-sample sputum smear microscopy; in the intervention arm, diagnosis and first-line evaluation of potential drug resistance was by a single sputum sample XpertMTB/RIF assay. Data were collected on individuals diagnosed with TB in eight months in 2012 in the clinics associated with these laboratories. In the first month, all labs were in the control arm. In each subsequent month, two labs were switched to the intervention arm. In the final month, all labs were in the intervention arm.\cite{Durovni2014-nq}

The outcome of interest was the proportion of unfavorable TB treatment outcomes, where unfavorable outcomes are defined as: loss to follow-up, TB-attributed death, death from other causes, change of diagnosis, transfer out (including to specialized clinics for management of drug-resistant TB or drug intolerance), and suspicion of drug resistance. In total, the trial analyzed the intervention and outcome status of 4,054 patients.\cite{Trajman2015-mj}

\subsubsection{Goodness of fit of mixed effects models}
Before analyzing these data using non-parametric approaches, we consider the goodness of fit of the mixed effects models. We fit both the MEM and CPI models, as usual assuming independent normally-distributed random effects. In this case, the CPI model yields nearly the same fitted values as the MEM model, so we consider only the MEM model from this point. A variety of methods have been proposed to assess the assumption of independent normally-distributed random effects.\cite{Alonso2008-nm,Huang2009-vi,Verbeke2013-oq,Meintanis2016-al,Drikvandi2017-rf,Singer2013-op,Ritz2004-yq} We use several of these methods to assess the assumption in this case; details are in Appendix \ref{app3}. Some methods indicate a violation of the assumption and others do not, but caution should be exercised in interpreting these results as diagnostic tests may not be powerful or reliable for such a small number of clusters.\cite{Yap2011-bu} Because of the potential of model misspecification, we proceed with the non-parametric analyses.

\subsubsection{Results}
The primary analysis conducted by Trajman et al., which did not adjust for time effects, found a decrease in the number of events (unsuccessful outcomes) in the intervention arm compared to the control arm, although this decrease was not statistically significant at the 0.05 level.\cite{Trajman2015-mj} Re-analyzing the data using the NPWP method, Thompson et al.\ found a statistically significant decrease on both the odds ratio and risk difference scales.\cite{Thompson2018-so}

We analyzed these data using all of the methods described here using both the risk difference and log odds ratio contrasts; all exact inference methods use 500 permutations. Note that the NPWP method corresponds to that used by Thompson et al.\ for the risk difference scale. For the odds ratio, we use the log odds ratio contrast and exponentiate after averaging across periods for comparability with the SC and CO methods, whereas Thompson et al.\ calculate on the odds ratio scale directly, leading to a slight difference in the estimate.\cite{Thompson2018-so} In both cases, inference may differ slightly because of the stochasticity in the permutation-based inference. This stochasticity, as well as the difference of calculating under the alternative hypothesis rather than the null, can also lead to confidence intervals including the null when the hypothesis test rejects the null and vice versa. Also note that since there are two clusters which cross over at each time period, COSC-1 and COSC-2 yield different results. The results, reported on the risk difference scale and the odds ratio scale, are summarized in Table \ref{Tbl:TB}.

\begin{table}[!ht]
\caption{Results from SW-CRT of diagnostic method on rates of unfavorable TB treatment outcomes in Brazil, by analysis method}  \label{Tbl:TB}
\centering
\begin{tabular}{|l|llc|llc|} 
\toprule
    & \multicolumn{3}{c|}{Risk Difference} & \multicolumn{3}{c|}{Odds Ratio} \\
   Method & Estimate & 95\% Conf. Int. & $p$ & Estimate & 95\% Conf. Int. & $p$ \\
   \midrule
   MEM/CPI & -3.59\% & (-8.9\%, 1.4\%) & 0.126 & 0.835 & (0.66, 1.07) & 0.104 \\
   MEM/CPI-a & -3.59\% & (-8.4\%, 1.1\%) & 0.105 & 0.835 & (0.66, 1.05) & 0.091 \\
   NPWP & -4.83\% & (-10.1\%, 0.1\%) & 0.050 & 0.794 & (0.61, 0.99) & 0.046 \\
   SC-1 & -7.28\% & (-18.2\%, 1.0\%) & 0.084 & 0.703 & (0.44, 1.04) & 0.066 \\
   SC-2 & -8.29\% & (-18.3\%, 1.1\%) & 0.080 & 0.675 & (0.43, 1.07) & 0.082 \\
   CO-1 & -7.34\% & (-14.5\%, 0.5\%) & 0.064 & 0.703 & (0.49, 1.04) & 0.046 \\
   CO-2 & -6.97\% & (-14.0\%, 0.5\%) & 0.052 & 0.717 & (0.50, 1.03) & 0.054 \\
   CO-3 & -7.00\% & (-14.0\%, 0.0\%) & 0.050 & 0.721 & (0.51, 1.00) & 0.036 \\
   COSC-1 & -7.01\% & (-15.5\%, 1.1\%) & 0.078 & 0.728 & (0.49, 1.10) & 0.118 \\
   COSC-2 & -5.12\% & (-14.7\%, 4.5\%) & 0.242 & 0.784 & (0.50, 1.18) & 0.222 \\
   ENS & -7.63\% & (-15.0\%, -0.6\%) & 0.036 & 0.696 & (0.49, 0.95) & 0.032 \\
\bottomrule
\end{tabular}
\end{table}

The novel methods identify a stronger treatment effect than do the model-based and NPWP methods. As Thompson et al. show, the NPWP method here places a large amount of weight on the contrast in the fifth period, which has a modest (-2.23\%) effect.\cite{Thompson2017-or} This attenuates the effect compared to, for example, CO-1, which equally weights contrasts in different periods. It also, however, reduces the variance of the overall estimator, thus yielding a lower $p$-value for the NPWP method than the CO methods which use the control crossovers only. The COSC methods do not appear to give more precision than the CO methods, but yield similar effect estimates. On both scales, the ENS method yields the lowest $p$-values, as it detects a strong effect and has more precision than the other novel methods. All of the results suggest a protective effect of the intervention, with the novel methods detecting a larger effect but with more uncertainty, and the NPWP method estimating a narrower confidence interval of smaller effect sizes. In this example, inference does indeed depend upon the analysis method used, demonstrating the need for consideration of the assumptions underlying each method.

\section{Discussion} \label{Discussion}

These results demonstrate the potential of analytic methods for SW-CRTs that do not rely on parametric modeling of secular trends for validity. These methods achieve greater power than the purely vertical within-period method by using the history of outcomes within each cluster inherently collected in a SW-CRT to match the most similar clusters or by using horizontal, within-cluster information. In the simulation settings used here, when the mixed effects models were misspecified, an ensemble method that averaged the crossover method and the synthetic control method had the highest power to detect a true treatment effect, followed by the crossover method. Further research is needed to determine in which settings each of these methods is likely to perform the best, and whether an ensemble method can be constructed to be powerful across settings. These results demonstrate that this simple ensemble method may in some settings perform better, but not that it always will or that this is the most powerful ensemble method for any particular setting. The potential for incorporating measured covariates or stratified randomization into the SC method may also lead to increased power in some situations.

While these methods are valuable and in general rely on weaker assumptions than mixed effects methods do for unbiasedness, they are still not as powerful as parametric mixed effects methods when the modeling assumptions are met. This leaves an important role for investigators to determine when assumptions are likely to be met and for research on secular trends in the conditions and interventions commonly studied by SW-CRTs to determine when non-parametric methods are needed. Additionally, further work on using regression diagnostics to identify violations of modeling assumptions would be very valuable. Investigators should consider exact inference on parametric methods when the modeling assumptions of mixed effects methods are likely to nearly hold and the non-parametric methods when the secular trends are unknown or the modeling assumptions methods are likely to be strongly violated. Caution should be exercised regarding the SC methods as well when the underlying data distribution is unlikely to be symmetric or asymptotically symmetric.

As can be seen in the imperfect correlation across methods and the variability of the estimates, with relatively few clusters, the estimation can be very sensitive to the analysis method, and even the weighting scheme, chosen. The performance of any method in one particular analysis of a trial may not reflect its overall operating characteristics. The specific settings where the estimators depend heavily on certain cluster-periods and the impact that has on operating characteristics deserve more scrutiny. Again, this is an area where ensemble methods may prove useful in mitigating high dependence on specific cluster-periods by certain methods.

The methods presented here also provide advantages in interpretability and flexibility. When the treatment effect is not constant across clusters or across time periods, the mixed effects model estimate for nonlinear link functions is a conditional parameter, and its interpretation can be unclear.\cite{Hubbard2010-ly} For linear link functions, the mixed effects model estimate is a weighted average intervention effect that depends on the form of the time trend in the treatment effect.\cite{Nickless2018-cm} With the non-parametric methods, using equal weighting across clusters and periods, the estimate is easily interpreted as an average treatment effect across cluster-periods in the study. Other causal effects can be estimated using weights chosen to match the target parameter, depending on the effect of interest and assumptions the investigators are willing to make about generalizability to a separate target population. More work is needed to determine how to select weights that maximize efficiency for specific causal parameters that may be of common interest. For instance, if the effect of time on the intervention effect is known or a parametric form can be assumed, there may be an efficiency-maximizing weighting scheme.

When treatment effects are not instantaneous---common in settings where treatment effects vary over time---methods must be modified. Throughout this article, we have assumed that the full effect of treatment occurs during the first period of treatment and that there are no anticipation effects prior to that point. In practice, it may be desirable to account for a lag in, or gradual onset of, treatment effects resulting from logistical complexity in reaching everyone in the cluster or for the effect to reach its full strength.\cite{Hemming2015-yq,Copas2015-so} This can be incorporated into the SC methods by taking as the time of start of the intervention the time of completion of such a transition period. It can be incorporated into the CO methods by taking as the ``crossover effect'' the contrast between the first period after the transition and the last period prior to any anticipation effects. Achieving the same efficiency as would be achieved with a similar trial with no transition period may require more clusters or more time between successive cluster crossovers. All SW-CRT methods are sensitive to properly accounting for the transition period, but the CO methods are particularly sensitive because of their focus on the horizontal comparison. If the transition period length is unknown or likely to vary across clusters, the CO methods may not be appropriate.

The synthetic control method allows for additional flexibility and the potential for increased power and use in a wider variety of settings. As mentioned above, it can be useful when lagged treatment effects or time-varying treatment effects make a specific causal estimand more desirable as a target for inference. It also, as shown in the simulations here, can be a valuable part of an ensemble method that improves the power of an estimator. And for trials with more periods, or a longer pre-intervention history, the SC method itself may perform better. In general, it provides many of the advantages of the non-parametric within-period method while using a matching-like procedure to increase power. For the COSC method, the relatively poor performance in these simulation settings may stem from the fact that one period of history is lost by using the crossover estimator. With few periods, that can have a large effect on the power. Again, a longer pre-intervention history may improve the value of this method.

Additionally, more advanced techniques can be used to improve synthetic control matching and thus potentially improve the power of the SC and COSC methods. Synthetic controls can incorporate measured covariates to improve the matching.\cite{Abadie2010-zy,Botosaru2019-rg} Moreover, new synthetic control algorithms and methodologies may also be useful in improving the matching and designing efficiency-maximizing weighting schemes. These include Bayesian synthetic control approaches,\cite{Bruhn2017-on,Kim2019-gp} flexible non-parametric synthetic control,\cite{Cerulli2019-mh} generalized synthetic control,\cite{Xu2017-ne} and augmented synthetic control.\cite{Ben-Michael2018-bq} The SC method, potentially incorporating these approaches to improve the causal inference component, may also provide a path for analysis of non-randomized studies that mimic stepped wedge trials, as the synthetic control may address confounding of treatment initiation. Further work is needed in this area to determine whether the stepped wedge trial design can be used as a target trial for causal inference from observational studies.\cite{Hernan2016-aj,Garcia-Albeniz2017-wk}

These methods increase the number of analysis options available to investigators conducting stepped wedge cluster randomized trials. The SC method provides a semi-parametric option that relies on weaker assumptions on the underlying data-generating process than mixed effects models, while increasing power compared to the NPWP method, and it can be improved with advanced methods or with additional pre-intervention data. The CO method provides a non-parametric option with greatly improved power, although it relies on a constant treatment effect that appears very soon after treatment initiation. Variations of these methods and ensemble methods can also be used to target specific causal parameters and improve power in certain circumstances. Careful consideration is still required, however, to determine which analysis method is most appropriate for each individual circumstance, and more work is needed to clarify how to make that determination \emph{a priori} or in a systematic way. Moreover, careful selection of analysis method does not alleviate all of the drawbacks and concerns about SW-CRTs and, as mentioned above, does not ensure ideal performance of any single analysis. Investigators should continue to select the appropriate trial design for each study, taking into account analysis methods, the target estimand, and power considerations, along with issues of logistical feasibility, ethics, risk-benefit profiles, and generalizability.

\subsection*{Financial disclosure}

Research reported in this publication was supported by the National Institute of Allergy and Infectious Diseases under Award Numbers 5T32AI007358-28 and 1F31AI147745 (for L.K.S.), and R3751164 (for V.d.G.); and National Institute of General Medical Sciences Award Number U54GM088558 (for M.L.).

\subsection*{Conflict of interest}

The authors declare no potential conflicts of interest.

\subsection*{Code Availability}

\texttt{R} code to implement the methods detailed in this article is available at:\\
\noindent\url{https://github.com/leekshaffer/SW-CRT-analysis}.\\
\noindent The code is still in progress as we work to improve usability and speed of implementation. Additionally, code to replicate the simulations detailed here and to generate the figures from those simulations is available at the same source.

\bibliography{main}

\clearpage

\section*{Figures}


\begin{center}
    \begin{figure}[!ht]
    \centering
    \includegraphics[width=\textwidth]{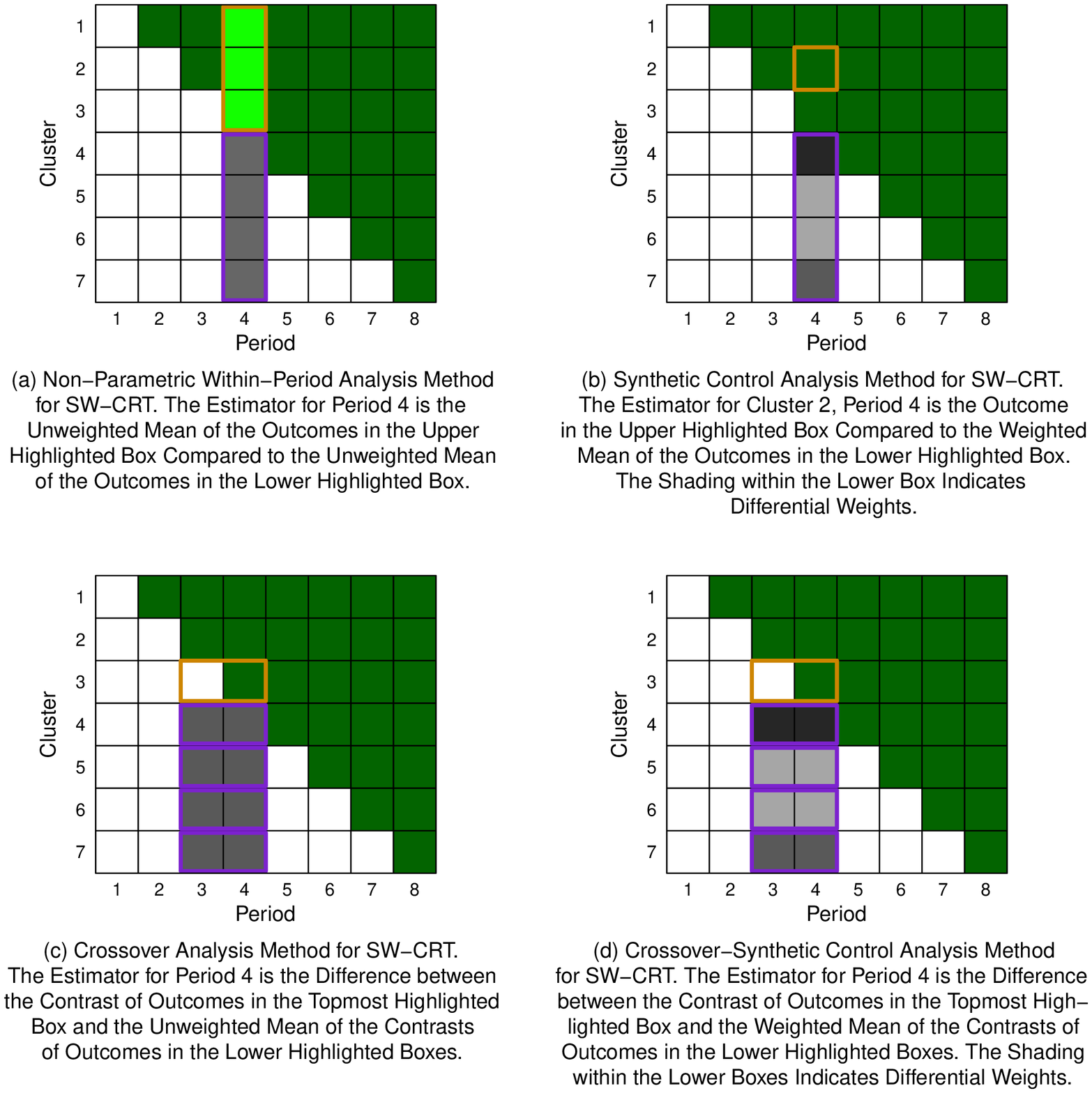}
    \caption{Schematic Representations of Several Existing and Novel Analysis Methods for a SW-CRT with Seven Clusters, Eight Periods, and One Crossover Per Period. Dark Green Boxes Indicate Cluster-Periods on Intervention and White/Gray Boxes Indicate Cluster-Periods on Control.}\label{Fig1}
    \end{figure}
\end{center}

\clearpage


\begin{center}
    \begin{figure}[!ht]
    \centering
    \includegraphics[width=\textwidth]{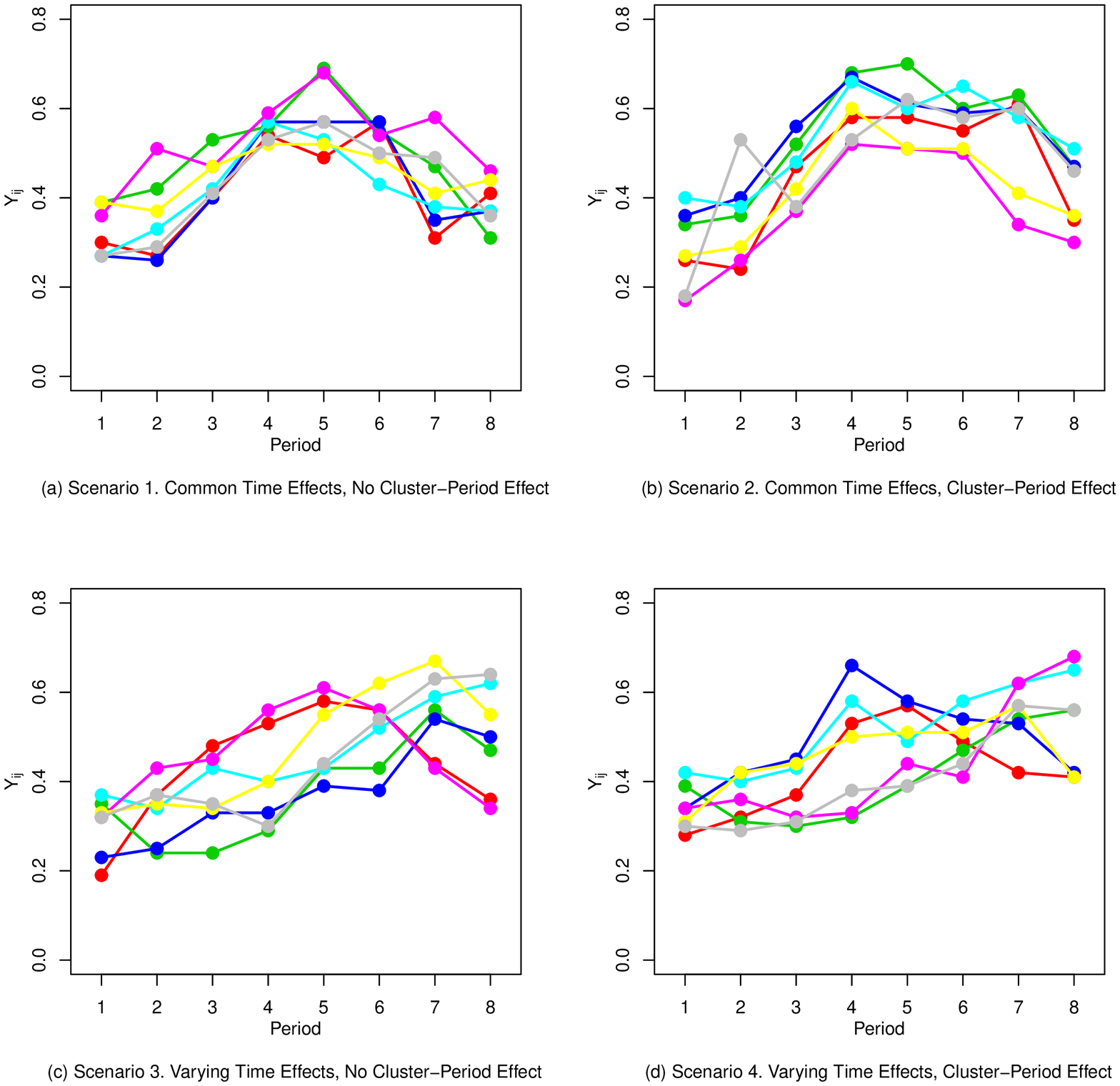}
    \caption{Sample Generated Data for Four Risk Difference Scenarios (Simulation 1) with No Treatment Effect. Each Line Represents the Simulated Cluster-Level Outcome for One Cluster over Eight Time Periods.}\label{Fig2}
\end{figure}
\end{center}

\clearpage

\begin{center}
    \begin{figure}[!ht]
    \centering
    \includegraphics[width=.87\textwidth]{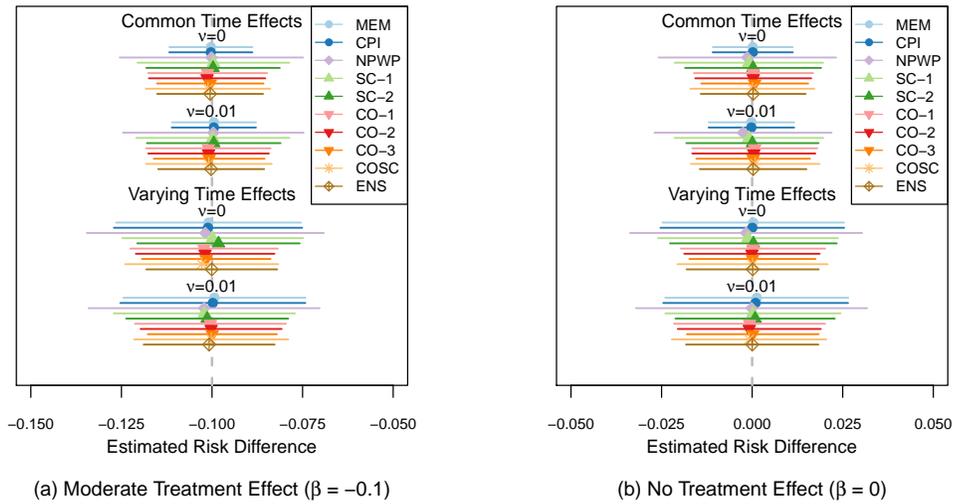}
    \caption{Mean Treatment Effect Estimates and 1/2-Standard Deviation of Estimates across 1,000 Simulations for Risk Difference Scenarios (Simulation 1) by Analysis Method}\label{Fig3}
\end{figure}
\end{center}

\begin{center}
    \begin{figure}[!ht]
    \centering
    \includegraphics[width=.87\textwidth]{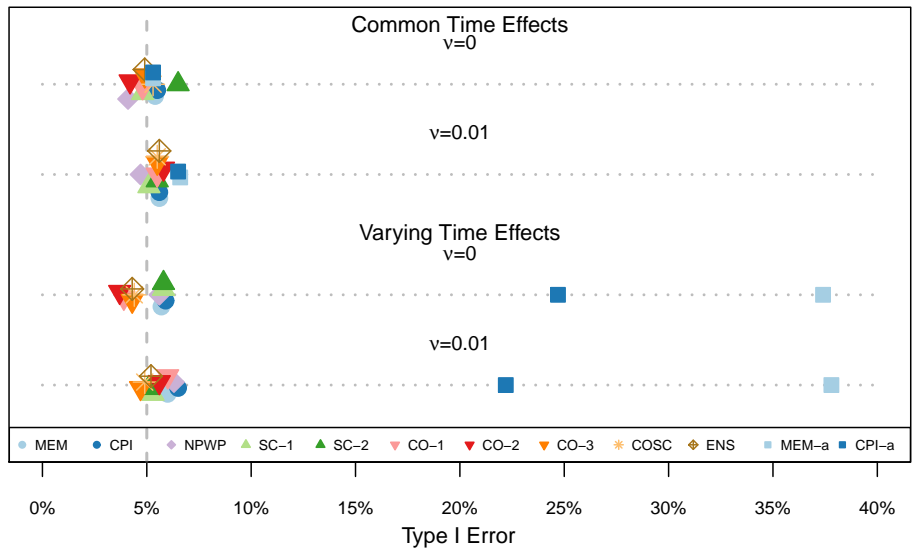}
    \caption{Type I Error Rate across 1,000 Simulations for Risk Difference Scenarios (Simulation 1) by Analysis Method}\label{Fig4}
\end{figure}
\end{center}

\clearpage

\begin{center}
    \begin{figure}[!ht]
    \centering
    \includegraphics[width=.87\textwidth]{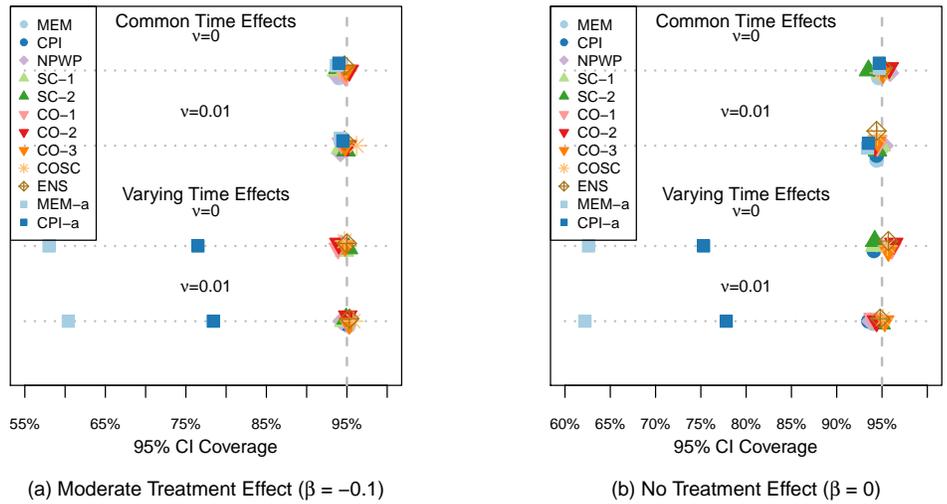}
    \caption{95\% Confidence Interval Coverage Rate across 1,000 Simulations for Risk Difference Scenarios (Simulation 1) by Analysis Method}\label{Fig5}
\end{figure}
\end{center}

\begin{center}
    \begin{figure}[!ht]
    \centering
    \includegraphics[width=.87\textwidth]{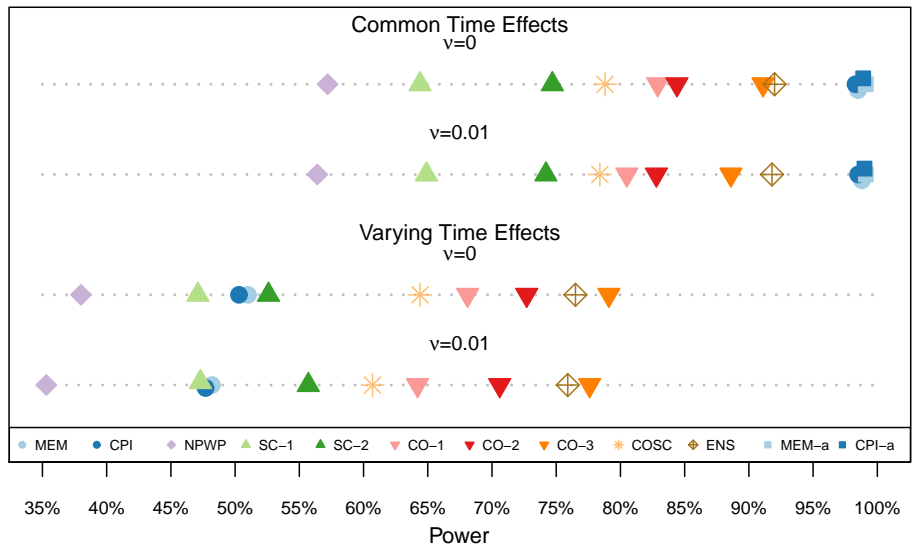}
    \caption{Power across 1,000 Simulations for Risk Difference Scenarios (Simulation 1) with Moderate Treatment Effect ($\beta = -0.1$) by Analysis Method}\label{Fig6}
\end{figure}
\end{center}

\clearpage


\begin{center}
    \begin{figure}[!ht]
    \centering
    \includegraphics[width=\textwidth]{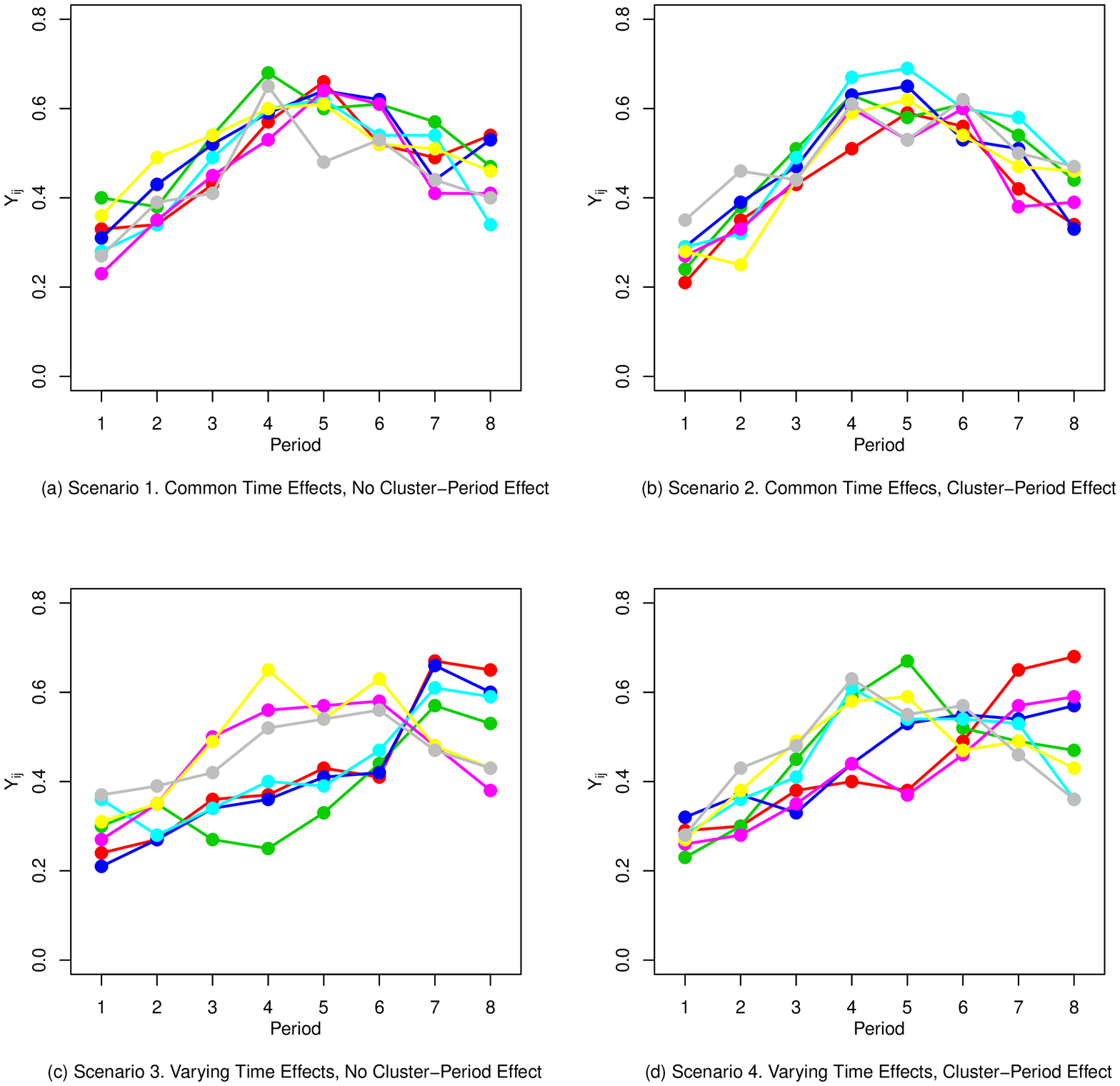}
    \caption{Sample Generated Data for Four Odds Ratio Scenarios (Simulation 2) with No Treatment Effect. Each Line Represents the Simulated Cluster-Level Outcome for One Cluster over Eight Time Periods.}\label{Fig7}
\end{figure}
\end{center}

\clearpage

\begin{center}
    \begin{figure}[!ht]
    \centering
    \includegraphics[width=.87\textwidth]{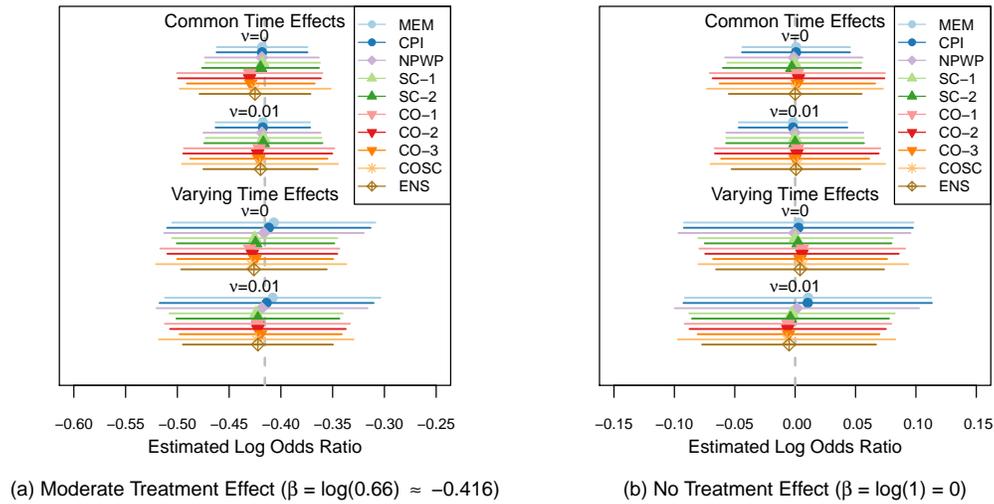}
    \caption{Mean Treatment Effect Estimates and 1/2-Standard Deviation of Estimates across 1,000 Simulations for Odds Ratio Scenarios (Simulation 2) by Analysis Method}\label{Fig8}
\end{figure}
\end{center}

\begin{center}
    \begin{figure}[!ht]
    \centering
    \includegraphics[width=.87\textwidth]{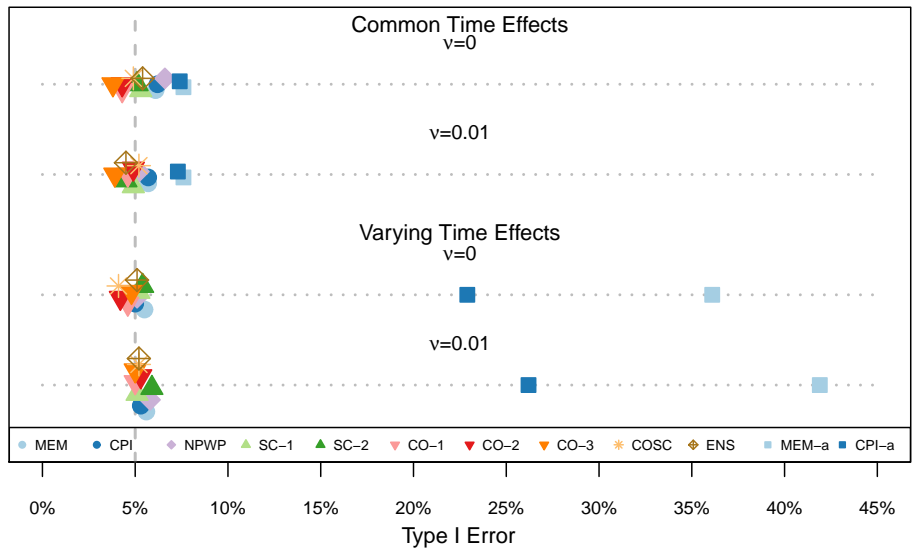}
    \caption{Type I Error Rate across 1,000 Simulations for Odds Ratio Scenarios (Simulation 2) by Analysis Method}\label{Fig9}
\end{figure}
\end{center}

\clearpage

\begin{center}
    \begin{figure}[!ht]
    \centering
    \includegraphics[width=.87\textwidth]{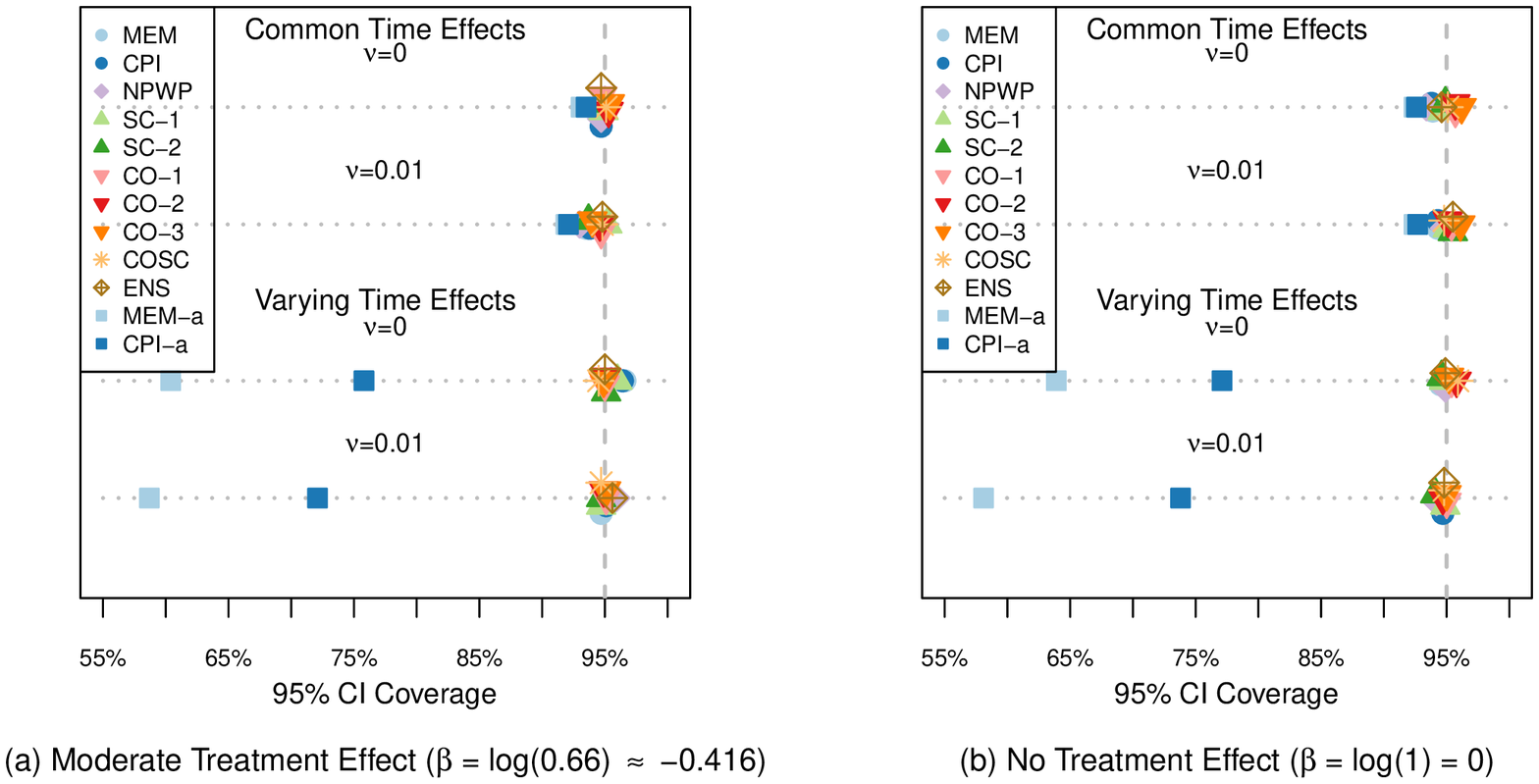}
    \caption{95\% Confidence Interval Coverage Rate across 1,000 Simulations for Odds Ratio Scenarios (Simulation 2) by Analysis Method}\label{Fig10}
\end{figure}
\end{center}

\begin{center}
    \begin{figure}[!ht]
    \centering
    \includegraphics[width=.87\textwidth]{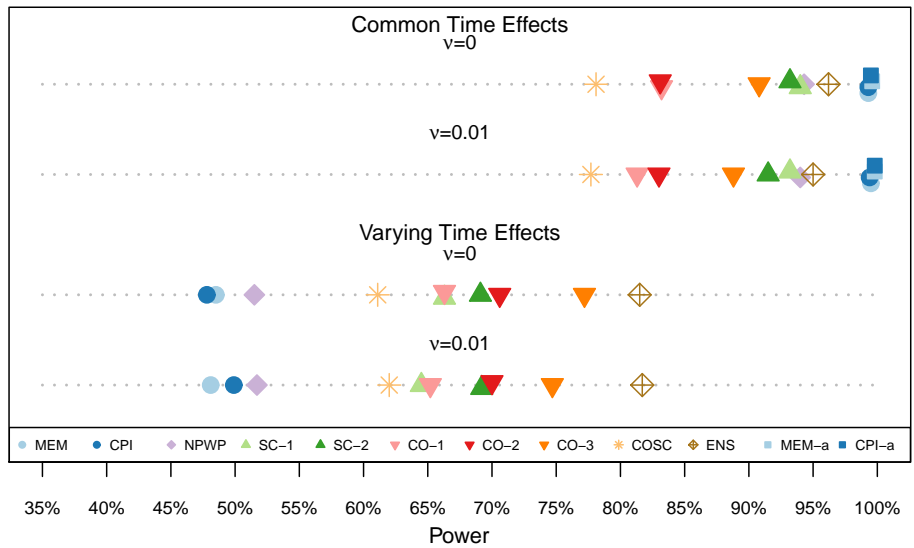}
    \caption{Power across 1,000 Simulations for Odds Ratio Scenarios (Simulation 2) with Moderate Treatment Effect ($\beta = \log(0.66) \approx -0.416$) by Analysis Method}\label{Fig11}
\end{figure}
\end{center}

\clearpage


\begin{center}
    \begin{figure}[!ht]
    \centering
    \includegraphics[width=\textwidth]{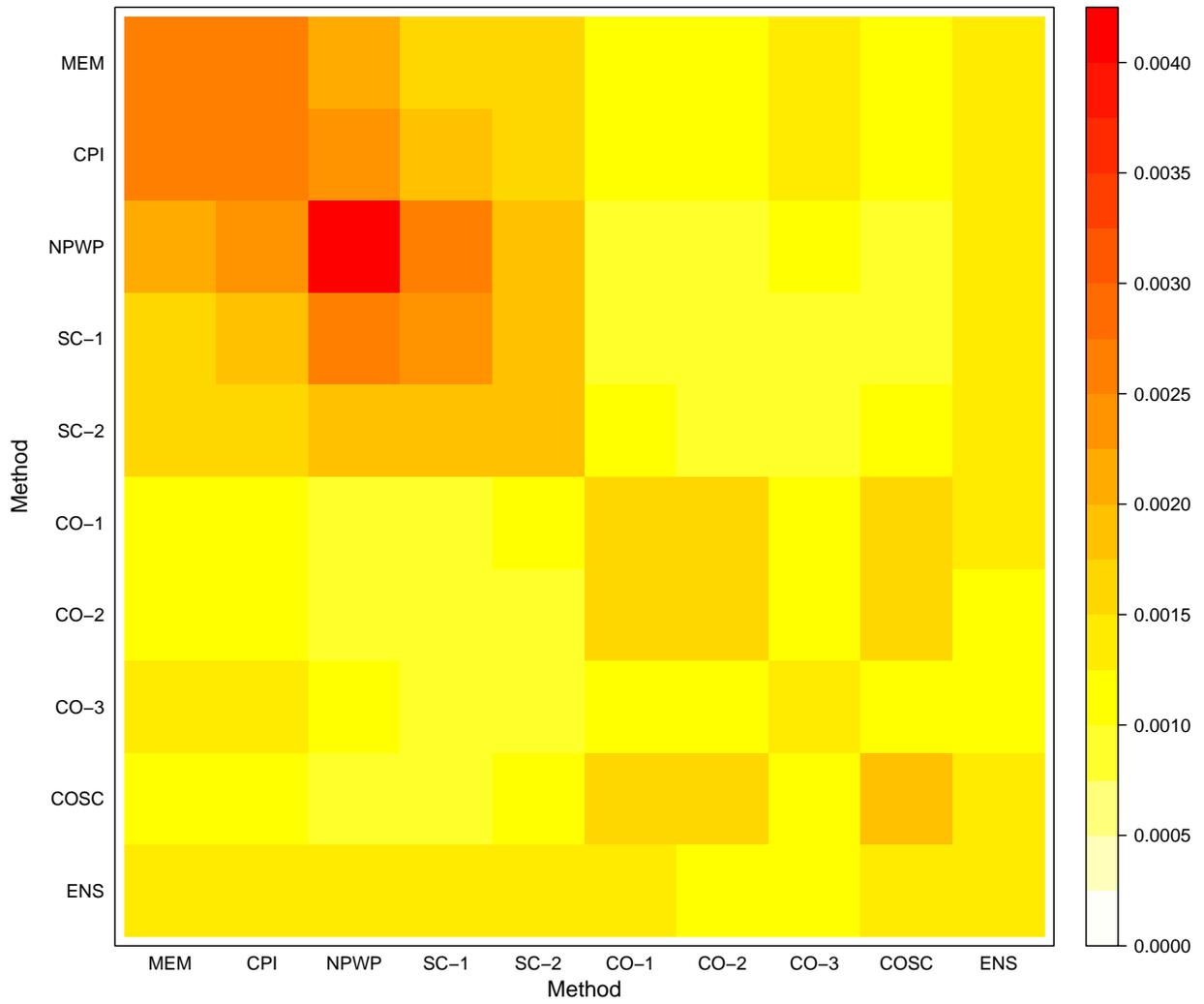}
    \caption{Pairwise Covariance between Effect Estimates from Different Methods: Simulation 1 (Risk Difference), Scenario 4, No Treatment Effect}\label{Fig12}
    \end{figure}
\end{center}

\clearpage

\begin{center}
    \begin{figure}[!ht]
    \centering
    \includegraphics[width=\textwidth]{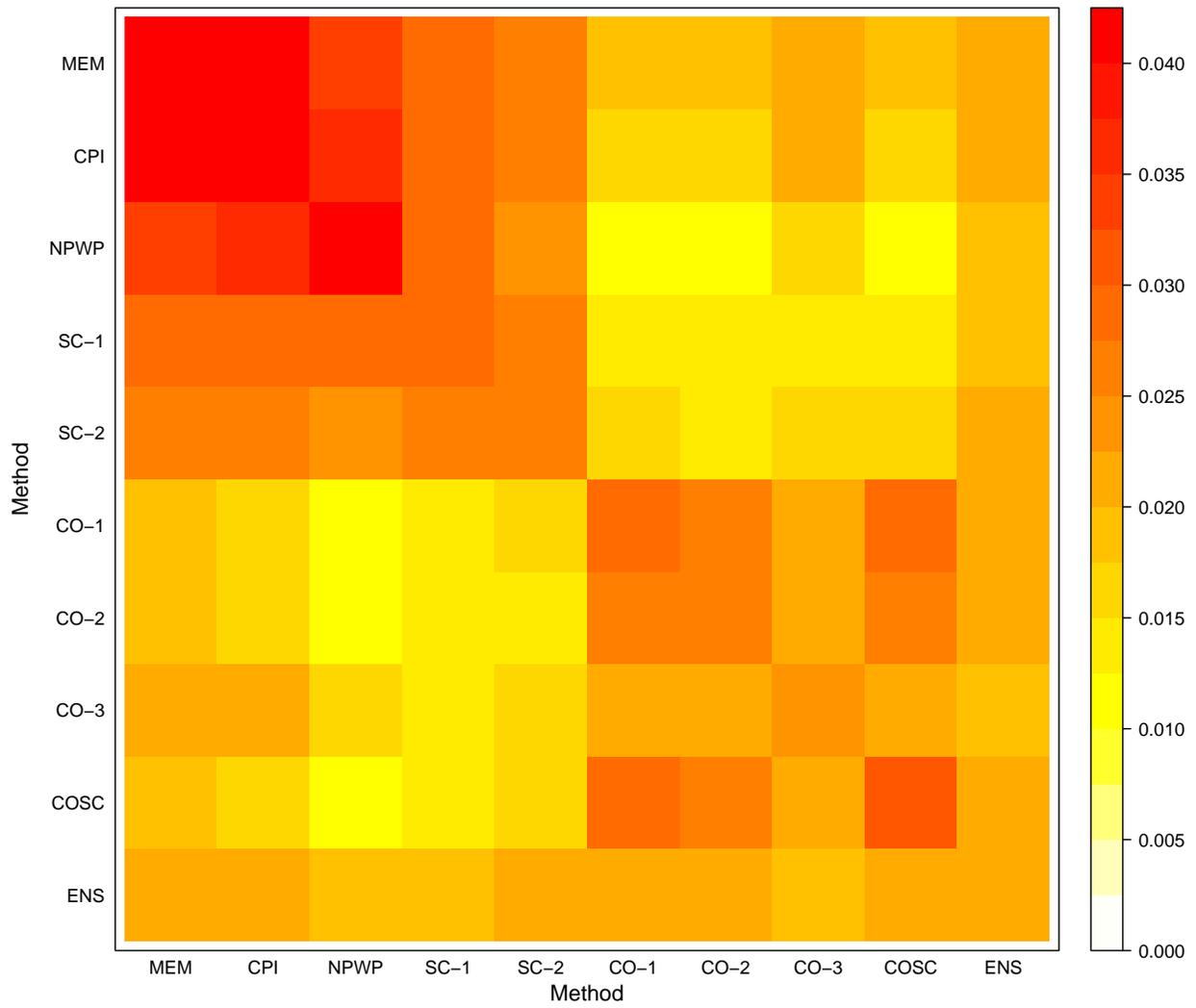}
    \caption{Pairwise Covariance between Effect Estimates from Different Methods: Simulation 2 (Odds Ratio), Scenario 4, No Treatment Effect}\label{Fig13}
    \end{figure}
\end{center}

\clearpage


\appendix
\appendixpage
\numberwithin{equation}{section}
\numberwithin{figure}{section}

\section{Proofs of theorems} \label{app1}
\setcounter{equation}{0}
\renewcommand{\theequation}{A\arabic{equation}}

\begin{theorem} \label{SCThm1}
Suppose that for each cluster $i$, denoting by $j_i$ the last period for which cluster $i$ is on control, $E[(Y_{i,1},Y_{i,2},\ldots,Y_{i,j_i})] = (Y_{\cdot 1},Y_{\cdot 2},\ldots,Y_{\cdot j_i}) \equiv \bm{Y}_{\cdot j_i}$ and that the distribution of $(Y_{i,1},Y_{i,2},\ldots,Y_{i,j_i})$ is symmetric about $\bm{Y}_{\cdot j_i}$. Suppose further that the cluster-level outcomes from two different clusters are uncorrelated conditional on the full vector of expected outcomes, $\bm{Y}_{\cdot J}$, and the treatment effect $\beta$. Then, for any cluster $i^*$ in any period $j^*$ where that cluster is on intervention ($j^* > j_{i^*}$), the synthetic control estimator $Z_{i^*,j^*}$ is an unbiased estimate of the marginal (across clusters) expectation for an untreated cluster in period $j^*$. That is, $E[Z_{i^*,j^*}] = Y_{\cdot j^*}$.
\end{theorem}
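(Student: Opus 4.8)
The plan is to reduce the claim to showing that a single centered statistic has mean zero, and then to exploit a sign-flip symmetry of the synthetic-control fitting problem that neutralizes the data-dependence of the weights. Writing $\tilde{Y}_{i,j} \equiv Y_{i,j} - Y_{\cdot j}$ for any control cluster-period, the constraint $\sum_n v_{i^*,j^*,n} = 1$ gives
\[
Z_{i^*,j^*} = \sum_{n} v_{i^*,j^*,n}\, Y_{m_n,j^*} = Y_{\cdot j^*} + \sum_n v_{i^*,j^*,n}\, \tilde{Y}_{m_n,j^*},
\]
so it suffices to prove $E[W] = 0$, where $W \equiv \sum_n v_{i^*,j^*,n}\, \tilde{Y}_{m_n,j^*}$. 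Note that each donor $m_n$ is on control in period $j^*$, so $j^* \le j_{m_n}$ and $\tilde{Y}_{m_n,j^*}$ is one coordinate of the symmetric control-period vector for cluster $m_n$, a fact I will use in the symmetry step.

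First I would show that the weights are invariant under a simultaneous reflection of all centered outcomes about zero. Using $\sum_n v_{i^*,j^*,n} = 1$, the objective $MSPE_{i^*,j^*}$ in (\ref{MSPEdef}) equals $\sum_{j':\,X_{i^*,j'}=0} \bigl(\tilde{Y}_{i^*,j'} - \sum_n v_{i^*,j^*,n}\, \tilde{Y}_{m_n,j'}\bigr)^2$, i.e.\ it depends on the data only through the centered fitting-period outcomes. Replacing every $\tilde{Y}_{i,j'}$ by $-\tilde{Y}_{i,j'}$ simultaneously negates the residual inside each square and hence leaves both the objective and the feasible set $\{v \ge 0,\ \sum_n v_n = 1\}$ exactly unchanged. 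Consequently the minimizing weights are identical on the reflected data, while $\tilde{Y}_{m_n,j^*} \mapsto -\tilde{Y}_{m_n,j^*}$; therefore, as a deterministic function of the data, this reflection sends $W \mapsto -W$.

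Second I would argue that the reflection preserves the joint law of the centered outcomes of $i^*$ and the donor clusters, so that $W$ and $-W$ are equal in distribution and $E[W] = -E[W] = 0$. The per-cluster symmetry assumption makes each vector $(\tilde{Y}_{i,1},\ldots,\tilde{Y}_{i,j_i})$ equal in distribution to its negation, and the conditional-uncorrelatedness assumption across distinct clusters, given $\bm{Y}_{\cdot J}$ and $\beta$, is the ingredient that lets these per-cluster reflections be assembled into a single joint reflection of the whole array without cross-cluster structure that the flip could disturb. Combining the deterministic identity $W \mapsto -W$ with this distributional invariance yields $E[W]=0$, and hence $E[Z_{i^*,j^*}] = Y_{\cdot j^*}$.

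The main obstacle is precisely that the weights $v_{i^*,j^*,n}$ are data-dependent and are fit on the same donor clusters whose period-$j^*$ outcomes enter $Z_{i^*,j^*}$; since a donor's fitting-period outcomes are in general correlated with its period-$j^*$ outcome, one cannot pull the expectation through the weights or invoke cross-cluster uncorrelatedness directly. The sign-flip argument is what circumvents this: it never requires computing $E[v_{i^*,j^*,n}]$ or decoupling a weight from $\tilde{Y}_{m_n,j^*}$, only the observation that the fitting problem is invariant under a transformation that negates the target statistic. Establishing the joint reflection-invariance in the third step is the delicate point where the two distributional assumptions must be used carefully. Finally, I would dispatch the degenerate fallback cases (nonconvergence, or no pre-intervention periods), where $Z_{i^*,j^*}$ is the unweighted donor mean: there the same reflection applies with fixed equal weights, so unbiasedness is immediate.
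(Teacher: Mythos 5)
Your proof is correct, and it rests on the same sign-flip invariance that drives the paper's proof, but you execute it by a recognizably different route. The paper works pointwise with the difference matrix $\bm{Y}_{i^*,j^*}^{diff}$: it pairs realizations $\{\bm{A},-\bm{A}\}$, proves the conditional-mean flip $E[Y_{m_n,j^*}-Y_{\cdot j^*}\mid \bm{Y}_{i^*,j^*}^{diff}=\bm{A}] = -E[Y_{m_n,j^*}-Y_{\cdot j^*}\mid \bm{Y}_{i^*,j^*}^{diff}=-\bm{A}]$, deduces $E[Y_{m_n,j^*}\mid \bm{v}_{i^*,j^*}]=Y_{\cdot j^*}$ from the observation $\norm{\bm{Y}_{i^*,j^*}^{diff}\bm{v}} = \norm{-\bm{Y}_{i^*,j^*}^{diff}\bm{v}}$, and finishes by the tower property. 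You instead build a single measure-preserving involution---simultaneous negation of all centered outcomes---under which the fitted weights are invariant (your cancellation of $Y_{\cdot j'}$ in the MSPE via $\sum_n v_n = 1$ is exactly the paper's reduction in its equation defining $\bm{v}_{i^*,j^*}$) while $W = Z_{i^*,j^*}-Y_{\cdot j^*}$ is negated, giving $W \overset{d}{=} -W$ and hence $E[W]=0$. This buys two things: you avoid conditioning on the null events $\{\bm{Y}_{i^*,j^*}^{diff}=\bm{A}\}$, where the paper's ``probability $\tfrac{1}{2}$'' computation is heuristic for continuous outcomes, and you obtain the strictly stronger conclusion that $Z_{i^*,j^*}-Y_{\cdot j^*}$ is symmetric about zero, not merely mean-zero. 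You also correctly flag the point the paper uses implicitly, that each donor is on control in period $j^*$ so $Y_{m_n,j^*}-Y_{\cdot j^*}$ lies inside the donor's symmetric vector. Two caveats you share with the paper rather than introduce: assembling per-cluster symmetry into a joint reflection requires independence (or joint sign-symmetry) across clusters, which is stronger than the theorem's stated ``uncorrelated''---but the paper's own proof likewise invokes ``the symmetry and independence assumptions''---and both arguments treat the argmin weights as a deterministic function of the (flip-invariant) objective, implicitly assuming a unique minimizer or an objective-only tie-breaking rule.
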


\begin{proof}
Consider a target cluster $i^*$ and period $j^*$ such that $X_{i^*,j^*} = 1$. Let $m_1,\ldots,m_{n^*}$ index the $n^* \equiv n_{0,j^*}$ clusters on control (``donor clusters'') in period $j^*$. For any cluster $i$, define $\bm{Y}_{i,j_{i^*}} \equiv (Y_{i,1},\ldots,Y_{i,j_{i^*}})^T$, where $j_{i^*}$ is the last period for which cluster $i^*$ is on control (and thus $j_{i^*} < j^*$). Denote the $j_{i^*} \times n^*$ matrix of pre-intervention donor cluster outcomes by $\bm{Y} \equiv \begin{pmatrix} \bm{Y}_{m_1,j_{i^*}}, & \hdots, & \bm{Y}_{m_{n^*},j_{i^*}} \end{pmatrix}$. Construct a $j_{i^*} \times n^*$ matrix of pre-intervention target cluster outcomes by repeating the vector $\bm{Y}_{i^*,j_{i^*}}$ $n^*$ times: $\bm{Y}_{i^*} \equiv \begin{pmatrix} \bm{Y}_{i^*,j_{i^*}}, & \hdots, & \bm{Y}_{i^*,j_{i^*}} \end{pmatrix}$.

By definition of the synthetic control estimator, $Z_{i^*,j^*} = \sum_{n=1}^{n^*} v_{i^*,j^*,n} Y_{m_n,j^*}$. The vector of these weights is denoted $\bm{v}_{i^*,j^*}$ and lies in the set:
\[ \mathcal{V} \equiv \left\{\bm{v} \in \mathbb{R}^{n}: ~ \sum_{n=1}^{n^*} v_n = 1 ~ \cap ~ 0 \le v_n \le 1 ~ \forall ~ 1 \le n \le n^* \right\}. \]
Note that for all $\bm{v} \in \mathcal{V}$, $\bm{Y}_{i^*} \bm{v} = \bm{Y}_{i^*,j_{i^*}}$. Then we can write:
\begin{equation}
    \bm{v}_{i^*,j^*} = \argmin_{\bm{v} \in \mathcal{V}} \norm*{\bm{Y}_{i^*,j_{i^*}} - \bm{Y} \bm{v}} = \argmin_{\bm{v} \in \mathcal{V}} \norm*{(\bm{Y}_{i^*} - \bm{Y}) \bm{v}} \equiv \argmin_{\bm{v} \in \mathcal{V}} \norm*{\bm{Y}_{i^*,j^*}^{diff} \bm{v}}, \label{vijDefn}
\end{equation}
where the difference matrix is:
\begin{align*}
    \bm{Y}_{i^*,j^*}^{diff} &\equiv \bm{Y}_{i^*} - \bm{Y} = \begin{pmatrix} Y_{i^*,1} - Y_{m_1,1} & Y_{i^*,1} - Y_{m_2,1} & \hdots & Y_{i^*,1} - Y_{m_{n^*},1} \\ \vdots & \vdots & \ddots & \vdots \\ Y_{i^*,j_{i^*}} - Y_{m_1,j_{i^*}} & Y_{i^*,j_{i^*}} - Y_{m_2,j_{i^*}} & \hdots & Y_{i^*,j_{i^*}} - Y_{m_{n^*},j_{i^*}} \end{pmatrix} \\
    &= \begin{pmatrix} \bm{Y}_{i^*,j_{i^*}} - \bm{Y}_{m_1,j_{i^*}} & \hdots & \bm{Y}_{i^*,j_{i^*}}  - \bm{Y}_{m_{n^*},j_{i^*}} \end{pmatrix}
\end{align*}
By the symmetry and independence assumptions, for any $n=1,\ldots,n^*$, $\bm{Y}_{i^*,j_{i^*}}$ and $\bm{Y}_{m_n,j_{i^*}}$ are independent and both are symmetrically distributed with a common mean $\bm{Y}_{\cdot j_{i^*}}$. Thus, each column of $\bm{Y}_{i^*,j^*}^{diff}$ is symmetrically distributed with expectation $\bm{0}$ and hence the matrix $\bm{Y}_{i^*,j^*}^{diff}$ is symmetrically distributed with expectation $\bm{0}$.

Moreover, for any $n=1,\ldots,n^*$, the distribution of $Y_{m_n,j^*}$ depends only on the $n$th column of $\bm{Y}_{i^*,j^*}^{diff}$. Since $\begin{pmatrix} Y_{m_n,1}-Y_{\cdot,1}, & \ldots, & Y_{m_n,j^*}-Y_{\cdot j^*} \end{pmatrix}$ is symmetrically distributed about $\begin{pmatrix}0, & \ldots, & 0 \end{pmatrix}$, the distribution of $Y_{m_n,j^*} - Y_{\cdot j^*}$ conditional on $\bm{Y}_{m_n,j^*} - \bm{Y}_{\cdot j^*} = (a_1,\ldots,a_{j^*})$ for any constants $a_j$ is equal to the distribution of $-(Y_{m_n,j^*} - Y_{\cdot j^*})$ conditional on $\bm{Y}_{m_n,j^*} - \bm{Y}_{\cdot j^*} = (-a_1,\ldots,-a_{j^*})$. Hence:
\begin{equation}
    E[Y_{m_n,j^*} - Y_{\cdot j}|\bm{Y}_{i^*,j^*}^{diff} = \bm{A}] = -E[Y_{m_n,j^*} - Y_{\cdot j}|\bm{Y}_{i^*,j^*}^{diff} = -\bm{A}], \label{ExpFlip}
\end{equation}
for any difference matrix $\bm{A}$.

For any difference matrix $\bm{A}$ and any donor cluster $m_n$, then:
\begin{align}
    E[Y_{m_n,j^*}|&\bm{Y}_{i^*,j^*}^{diff} \in \{-\bm{A},\bm{A}\}] = P[\bm{Y}_{i^*,j^*}^{diff}=\bm{A}|\bm{Y}_{i^*,j^*}^{diff} \in \{-\bm{A},\bm{A}\}] E[Y_{m_n,j^*}|\bm{Y}_{i^*,j^*}^{diff} = \bm{A}] \nonumber \\
    &\qquad \qquad + P[\bm{Y}_{i^*,j^*}^{diff}=-\bm{A}|\bm{Y}_{i^*,j^*}^{diff} \in \{-\bm{A},\bm{A}\}] E[Y_{m_n,j^*}|\bm{Y}_{i^*,j^*}^{diff} = -\bm{A}] \nonumber \\
    &= \frac{1}{2} E[Y_{m_n,j^*}|\bm{Y}_{i^*,j^*}^{diff} = \bm{A}] + \frac{1}{2} E[Y_{m_n,j^*}|\bm{Y}_{i^*,j^*}^{diff} = -\bm{A}] \mbox{, by symmetry of } \bm{Y}_{i^*,j^*}^{diff} \nonumber \\
    &= Y_{\cdot j^*} + \frac{1}{2} \left( E[Y_{m_n,j^*} - Y_{\cdot j^*}|\bm{Y}_{i^*,j^*}^{diff} = \bm{A}] + E[Y_{m_n,j^*} - Y_{\cdot j^*}|\bm{Y}_{i^*,j^*}^{diff} = -\bm{A}] \right) \nonumber \\
    &= Y_{\cdot j^*} + \frac{1}{2} (0) \mbox{, by equation (\ref{ExpFlip})} \nonumber \\
    &= Y_{\cdot j^*}. \label{ExpnY0j*}
\end{align}
By equation (\ref{vijDefn}) and since $\norm*{\bm{Y}_{i^*,j^*}^{diff} \bm{v}} = \norm*{-\bm{Y}_{i^*,j^*}^{diff} \bm{v}}$ for all $\bm{v} \in \mathcal{V}$, $\bm{v}_{i^*,j^*}$ is correlated with the outcome vector $(Y_{m_1,j^*},~Y_{m_2,j^*},~\ldots,~Y_{m_{n^*},j^*})$ only through the elementwise absolute value of $\bm{Y}_{i^*,j^*}^{diff}$. Hence, for any $n=1,\ldots,n^*$:
\begin{align}
    E[Y_{m_n,j^*} | \bm{v}_{i^*,j^*}] &= E \left[ \left. E \left[ \left. Y_{m_n,j^*} \right| \bm{v}_{i^*,j^*} , \bm{Y}_{i^*,j^*}^{diff} \in \{\bm{A},-\bm{A}\} \right] \right|\bm{v}_{i^*,j^*} \right] \nonumber \\
    &= E \left[ \left. E \left[ \left. Y_{m_n,j^*} \right| \bm{Y}_{i^*,j^*}^{diff} \in \{\bm{A},-\bm{A}\} \right] \right| \bm{v}_{i^*,j^*} \right] \nonumber \\
    &= E \left[ \left. Y_{\cdot j^*} \right| \bm{v}_{i^*,j^*} \right] \mbox{, by equation (\ref{ExpnY0j*})} \nonumber \\
    &= Y_{\cdot j^*} \mbox{, since } Y_{\cdot j^*} \mbox{ is fixed.} \label{Exp1}
\end{align}

And thus, denoting by $v_{i^*,j^*,n}$ the $n$th component of the vector $\bm{v}_{i^*,j^*}$:
\begin{align}
E[Z_{i^*,j^*}] &=  E \left[ \sum_{n=1}^{n^*} v_{i^*,j^*,n} Y_{m_n, j^*} \right] = E \left[ E \left[ \left. \sum_{n=1}^{n^*} v_{i^*,j^*,n} Y_{m_n, j^*} \right| \bm{v}_{i^*,j^*} \right] \right] \nonumber \\
&= \sum_{n=1}^{n^*} E \left[ v_{i^*,j^*,n} E \left[ Y_{m_n,j^*} | \bm{v}_{i^*,j^*} \right] \right] \nonumber = \sum_{n=1}^{n^*} E \left[ v_{i^*,j^*,n} Y_{\cdot j^*} \right] \mbox{, by equation (\ref{Exp1})} \nonumber \\
&= Y_{\cdot j^*} E \left[ \sum_{n=1}^{n^*} v_{i^*,j^*,n} \right] = Y_{\cdot j^*} \cdot 1 = Y_{\cdot j^*} \mbox{, as desired}.
\end{align}
\end{proof}

\begin{corollary} \label{SCCor1}
Suppose that for each control cluster-period $(i,j)$, the individual outcomes $Y_{i,j,k}$ are independent and identically distributed, conditional on the cluster and period, with expectation $Y'_{i,j}$ and finite variance. Suppose further that the $Y'_{i,j}$ values satisfy the conditions in Theorem \ref{SCThm1}; that is, they are symmetrically distributed about some common expectation vector $\bm{Y}_{\cdot J}$ and each cluster's values are independent of the values from all other clusters. Then the synthetic control estimator $Z_{i^*,j^*}$ for any cluster $i^*$ in any period $j^*$ where that cluster is on intervention is an asymptotically (with respect to the number of individuals in each cluster) unbiased estimate of $Y_{\cdot j^*}$.
\end{corollary}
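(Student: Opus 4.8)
The plan is to reduce the corollary to Theorem \ref{SCThm1} by letting the number of individuals per cluster grow and showing that the synthetic control estimator built from the noisy cluster means converges to the one built from the symmetric cluster-conditional expectations. Write $Y_{i,j} = Y'_{i,j} + \epsilon_{i,j}$, where $\epsilon_{i,j} = Y_{i,j} - Y'_{i,j}$ is the sampling error of the within-cluster-period average of $K$ i.i.d.\ outcomes. Since each $\epsilon_{i,j}$ is a mean of $K$ i.i.d.\ terms with mean zero and finite variance, $E[\epsilon_{i,j}] = 0$ and $\V(\epsilon_{i,j}) = O(1/K)$, so $Y_{i,j} \xrightarrow{p} Y'_{i,j}$ as $K \to \infty$ for every cluster-period.

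First I would observe that the period-$j^*$ sampling noise can be dealt with exactly, before any asymptotics. The weights $\bm{v}_{i^*,j^*}$ are determined by minimizing $MSPE_{i^*,j^*}$ over the \emph{pre-intervention} periods of cluster $i^*$, so they are a function of the outcomes in periods $j' \le j_{i^*}$ only. Conditioning on the full vector of cluster-conditional expectations $Y'$ and on all pre-intervention outcomes (hence on the realized weights), and using that the period-$j^*$ individual outcomes are, given $Y'$, independent of the pre-intervention outcomes with mean $Y'_{m_n,j^*}$, gives $E[Z_{i^*,j^*} \mid Y', \text{pre-intervention data}] = \sum_{n} v_{i^*,j^*,n} Y'_{m_n,j^*}$. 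Taking the expectation over the pre-intervention noise then yields $E[Z_{i^*,j^*} \mid Y'] = \sum_n E[v_{i^*,j^*,n} \mid Y'] \, Y'_{m_n,j^*}$, so the only remaining randomness is in the weights.

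Next I would show the weights converge. For fixed $Y'$, let $\tilde{\bm{v}}$ denote the minimizer of the same quadratic objective computed from the noise-free values $Y'_{i,j}$. Because the objective $\norm*{\bm{Y}_{i^*,j^*}^{diff}\bm{v}}$ is continuous in the data and is minimized over the fixed compact convex simplex $\mathcal{V}$, and because the noisy difference matrix converges in probability to its noise-free counterpart, the minimizing weights satisfy $\bm{v}_{i^*,j^*} \xrightarrow{p} \tilde{\bm{v}}$; since the weights lie in $[0,1]$, bounded convergence gives $E[v_{i^*,j^*,n} \mid Y'] \to \tilde{v}_n$. Hence $E[Z_{i^*,j^*}\mid Y'] \to \sum_n \tilde{v}_n Y'_{m_n,j^*}$, which is exactly the synthetic control estimator of Theorem \ref{SCThm1} applied to the symmetric outcomes $Y'$. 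Finally, $E[Z_{i^*,j^*}] = E_{Y'}\!\left[E[Z_{i^*,j^*}\mid Y']\right]$, and because the estimators are uniformly bounded in $L^2$ (the donor averages have second moments bounded uniformly in $K$) the outer expectation passes to the limit; applying Theorem \ref{SCThm1} to $Y'$ gives $E[Z_{i^*,j^*}] \to Y_{\cdot j^*}$, i.e.\ asymptotic unbiasedness.

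The main obstacle is the weight-convergence step, because the argmin map is not globally continuous when the minimizer of the limiting quadratic is non-unique; a careful argument (or a tie-breaking rule invariant under the sign flip used in Theorem \ref{SCThm1}) is needed so that the limiting weights are well defined and still satisfy the symmetry structure that Theorem \ref{SCThm1} exploits. A secondary technical point is justifying the interchange of limit and expectation, which I would handle via uniform integrability from the uniformly bounded second moments of the donor cluster means (immediate in the binary case, where all outcomes lie in $[0,1]$).
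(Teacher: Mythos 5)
Your architecture is genuinely different from the paper's: you condition on the latent means $Y'_{i,j}$ and the pre-intervention data to remove the period-$j^*$ sampling noise exactly, then try to pass to the limit in the weights themselves and invoke Theorem \ref{SCThm1} at the level of the noise-free $Y'$ process. The paper never asks the weights to converge. It instead notes that by the CLT the sampling error of each cluster mean is asymptotically normal, hence symmetric, so the noisy pre-intervention vectors $(Y_{i,1},\ldots,Y_{i,j_i})$ are themselves asymptotically symmetric about $\bm{Y}_{\cdot j_i}$; it then reruns the sign-flip conditioning argument of Theorem \ref{SCThm1} on the finite-$K$ data with $o(1)$ error terms (the conditional probability of $\bm{A}$ versus $-\bm{A}$ is $1/2 + o(1)$, the conditional mean of $Y_{m_n,j^*} - Y_{\cdot j^*}$ flips sign up to $o(1)$), concluding $E[Y_{m_n,j^*} \mid \bm{v}_{i^*,j^*}] = Y_{\cdot j^*} + o(1)$ and hence $\lim_{K\to\infty} E[Z_{i^*,j^*}] = Y_{\cdot j^*}$. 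That argument uses only the invariance of the objective $\norm*{\bm{Y}_{i^*,j^*}^{diff}\bm{v}}$ under $\bm{A} \mapsto -\bm{A}$, and it applies to the actual finite-$K$ weights whatever selection the optimizer makes.

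The obstacle you flagged is a genuine gap, and in this setting it is not a corner case: the step $\bm{v}_{i^*,j^*} \xrightarrow{p} \tilde{\bm{v}}$ fails generically. The difference matrix is $j_{i^*} \times n^*$, and in a stepped wedge the number of donors $n^*$ typically exceeds the number of pre-intervention periods $j_{i^*}$ (early-crossing clusters have one or two pre-periods and many donors), so the limiting quadratic has a nontrivial kernel and its minimizer over the simplex $\mathcal{V}$ is a positive-dimensional face rather than a point. Argmin stability then gives only convergence to the minimizing \emph{set}: the noisy weights can oscillate within a neighborhood of that face, $E[v_{i^*,j^*,n} \mid Y']$ need not converge, and different points of the face generally give different values of $\sum_n v_n Y'_{m_n,j^*}$, so your claimed limit $E[Z_{i^*,j^*} \mid Y'] \to \sum_n \tilde{v}_n Y'_{m_n,j^*}$ is not justified. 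A tie-breaking rule imposed on the limit problem does not repair this, because the finite-$K$ optimizer's output need not converge to the tie-broken selection. The repair is essentially the paper's move: abandon weight convergence altogether and exploit the asymptotic symmetry of the noisy data directly, which requires only that the weight map be invariant under sign flip of the difference matrix --- true for any deterministic selection since $\norm*{\bm{A}\bm{v}} = \norm*{-\bm{A}\bm{v}}$. Your exact handling of the period-$j^*$ noise and the uniform-integrability point for the limit--expectation interchange are both sound (and, in the binary case, immediate), but as written the proof does not go through without the weight-convergence step. One further small remark: both your argument and the paper's implicitly assume the sampling errors are independent across periods within a cluster given $Y'$; this is harmless but worth stating explicitly.
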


\begin{proof}
If the individual outcomes are independent and identically distributed with expectation $Y'_{i,j}$ and finite variance, then, by the Central Limit Theorem, the mean outcome $Y_{i,j}$ for each control cluster-period is asymptotically (with respect to the number of individuals in each cluster) normally distributed with expectation $Y'_{i,j}$ and finite variance. Thus, for any cluster $i$, we can write the distribution of the vector of pre-intervention cluster-level outcomes, as:
\begin{equation}
    (Y_{i,1},~\ldots,~Y_{i,j_i}) \overset{\mathcal{D}}{=} (Y'_{i,1},~\ldots,~Y'_{i,j_i}) + (B_{i,1},~\ldots,~B_{i,j_i}) + o_p(1),
\end{equation}
where $B_{i,j} \sim N(0,\xi_{i,j}^2)$ for some finite $\xi_{i,j}$, $o_p(1) \overset{P}{\rightarrow} \bm{0}$, and the $B_{i,j}$ are mutually independent. Since the $B_{i,j}$ are normally (and hence symmetrically) distributed, the limiting distribution of $(Y_{i,1},\ldots,Y_{i,j_i})$ is symmetric about $(Y_{\cdot 1},\ldots,Y_{\cdot j_i})$ by the assumption on $Y'_{i,j}$. Moreover, since the individual outcomes are independent conditional on the cluster-period mean and the cluster means are independent by assumption, $(Y_{i,1},\ldots,Y_{i,j_i}) \indep (Y_{i',1},\ldots,Y_{i',j_{i'}})$ for any $i \neq i'$.

Because of this asymptotic symmetry, for any target cluster-period ($i^*,j^*$) where $X_{i^*,j^*} = 1$, for any difference matrix $\bm{A}$:
\begin{align}
    &\lim_{K \to \infty} P[\bm{Y}_{i^*,j^*}^{diff} = \bm{A}|\bm{Y}_{i^*,j^*}^{diff} \in \{\bm{A},-\bm{A}\}] = \lim_{K \to \infty} P[\bm{Y}_{i^*,j^*}^{diff} = -\bm{A}|\bm{Y}_{i^*,j^*}^{diff} \in \{\bm{A},-\bm{A}\}] \mbox{, and so} \nonumber \\
    &P[\bm{Y}_{i^*,j^*}^{diff} = \bm{A}|\bm{Y}_{i^*,j^*}^{diff} \in \{\bm{A},-\bm{A}\}] = 1/2 + o(1), \label{Asyf}
\end{align}
where $\lim_{K \to \infty} o(1) = 0$. Additionally, by this symmetry, for any donor cluster $m_n \in \{m_1,\ldots,m_{n^*}\}$ (defined as in Theorem \ref{SCThm1}):
\begin{equation}
    E \left[ \left. Y_{m_n,j^*} - Y_{\cdot j^*} \right| \bm{Y}_{i^*,j^*}^{diff} = \bm{A} \right] = - E \left[ \left. Y_{m_n,j^*} - Y_{\cdot j^*} \right| \bm{Y}_{i^*,j^*}^{diff} = -\bm{A} \right] + o(1). \label{AsyE}
\end{equation}
Thus, for any difference matrix $\bm{A}$:
\begin{align*}
    &E \left[ \left. Y_{m_n,j^*} \right| \bm{Y}_{i^*,j^*}^{diff} \in \{\bm{A},-\bm{A}\} \right] = P[\bm{Y}_{i^*,j^*}^{diff} = \bm{A}|\bm{Y}_{i^*,j^*}^{diff} \in \{\bm{A},-\bm{A}\}] E[Y_{m_n, j^*}|Y_{i^*, j^*}^{diff} = A] \\
    &\qquad \qquad + P[\bm{Y}_{i^*,j^*}^{diff} = -\bm{A}|\bm{Y}_{i^*,j^*}^{diff} \in \{\bm{A},-\bm{A}\}] E[Y_{m_n, j^*}|\bm{Y}_{i^*, j^*}^{diff} = -\bm{A}] \\
    &= \left(\frac{1}{2} + o(1) \right) \left( E[Y_{m_n, j^*}|\bm{Y}_{i^*, j^*}^{diff} = \bm{A}] + E[Y_{m_n, j^*}|\bm{Y}_{i^*, j^*}^{diff} = -\bm{A}] \right) \mbox{, by equation (\ref{Asyf})} \\
    &= \left(\frac{1}{2} + o(1) \right) \left( 2Y_{\cdot j^*} + E[Y_{m_n, j^*} - Y_{\cdot j^*}|\bm{Y}_{i^*, j^*}^{diff} = \bm{A}] + E[Y_{m_n, j^*} - Y_{\cdot j^*}|\bm{Y}_{i^*, j^*}^{diff} = -\bm{A}] \right) \\
    &= \left(\frac{1}{2} + o(1) \right) \left( 2 Y_{\cdot j^*} + o(1) \right) \mbox{, by equation (\ref{AsyE})} \\
    &= Y_{\cdot j^*} + o(1) \mbox{, by the properties of convergence.}
\end{align*}
And so for any difference matrix $\bm{A}$ for any $n=1,\ldots,n^*$:
\begin{equation}
    E \left[ \left. Y_{m_n,j^*} \right| \bm{Y}_{i^*,j^*}^{diff} \in \{\bm{A},-\bm{A}\} \right] = Y_{\cdot j^*} + o(1).
\end{equation}
Following the steps in the proof of Theorem \ref{SCThm1}, and using the properties of convergence, then, for any $n=1,\ldots,n^*$:
\begin{equation}
    E \left[ \left. Y_{m_n,j^*} \right| \bm{v}_{i^*,j^*} \right] = Y_{\cdot j^*} + o(1),
\end{equation}
and thus, using that $\lim_{K \to \infty} o(1) = 0$:
\begin{equation}
    \lim_{K \to \infty} E[Z_{i^*,j^*}] = \lim_{K \to \infty} E \left[ \sum_{n=1}^{n^*} v_{i^*,j^*,n} Y_{m_n,j^*} \right] = \lim_{K \to \infty} \left(Y_{\cdot j^*} + o(1) \right) = Y_{\cdot j^*}.
\end{equation}
Hence, for any cluster $i^*$ and period $j^*$ where $X_{i^*,j^*} = 1$, $Z_{i,j}$ is an asymptotically unbiased estimate of $Y_{\cdot j^*}$.
\end{proof}

\begin{theorem} \label{SCThm2}
Assume that the assumptions of Theorem 1 are met and that for any intervention cluster-period $(i,j)$, $E[Y_{i,j}] = Y_{\cdot j} + \beta$. Then for the risk difference function, $g(y_1,y_2) = y_1 - y_2$, the synthetic control estimator $\hat{\beta}$ with weights $w_{i,j}$ independent of the outcomes is an unbiased estimate of $\beta$.
\end{theorem}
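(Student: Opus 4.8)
The plan is to exploit the linearity of the risk difference, which turns $\hat\beta$ into an affine combination of the observed outcomes $Y_{i,j}$ and the synthetic controls $Z_{i,j}$, and then apply Theorem \ref{SCThm1} term by term. First I would specialize $g(y_1,y_2)=y_1-y_2$, so that each cluster-period estimator becomes $\hat\beta_{i,j}=Y_{i,j}-Z_{i,j}$ and the overall estimator reads $\hat\beta=\sum_{j:\,0<n_{0,j}<I}\sum_{i\in I_{1,j}\cup I_{2,j}}(w_{i,j}/w)(Y_{i,j}-Z_{i,j})$. The entire argument then rests on two facts about the expectation of each summand.

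Next I would condition on the realized crossover order, which fixes the index sets $I_{1,j}\cup I_{2,j}$ over which the sums run and renders the weights $w_{i,j}$ (and hence the normalizer $w$) functions of the design rather than of the random index structure. Conditional on this order, for each intervention cluster-period $(i,j)$ appearing in the sum, the treatment-effect assumption gives $E[Y_{i,j}]=Y_{\cdot j}+\beta$, while Theorem \ref{SCThm1}---whose hypotheses are assumed here---gives $E[Z_{i,j}]=Y_{\cdot j}$. Subtracting and using linearity of expectation yields $E[\hat\beta_{i,j}]=\beta$ for every term, uniformly in $(i,j)$. The key observation is that this common value $\beta$ is exactly what makes the subsequent weighted average collapse irrespective of the particular weights chosen.

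I would then handle the weights. Because the weights are assumed independent of the outcomes, the ratio $w_{i,j}/w$ is independent of the outcome vector and hence of $\hat\beta_{i,j}$, so each summand factors as $E[(w_{i,j}/w)\hat\beta_{i,j}]=E[w_{i,j}/w]\cdot\beta$. Summing over all intervention cluster-periods and using $\sum_{j}\sum_{i} w_{i,j}/w=1$ by the definition of $w$ gives $E[\hat\beta]=\beta$ conditional on the order, after which a final application of the tower property removes the conditioning.

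The term-by-term expectation is routine; the only real care is needed in the bookkeeping around the weights. The set of cluster-periods entering the sum is itself random through the randomized crossover order, and the normalization divides by the random quantity $w$, so it is essential both that the weights be independent of the outcomes---which is precisely why SC-2, with its outcome-dependent MSPE weights, is excluded---and that the normalized weights sum to one. Conditioning on the order is the device that lets me treat the index sets and the weight ratios as independent of $\hat\beta_{i,j}$, reducing the whole statement to linearity of expectation combined with the two unbiasedness facts.
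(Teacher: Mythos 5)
Your proposal is correct and follows essentially the same route as the paper's proof: specialize $g$ to the difference so $\hat\beta_{i,j}=Y_{i,j}-Z_{i,j}$, apply Theorem \ref{SCThm1} and the assumption $E[Y_{i,j}]=Y_{\cdot j}+\beta$ term by term, and use the outcome-independence of the weights together with $\sum_{i,j} w_{i,j}/w = 1$ to collapse the weighted average. Your explicit conditioning on the crossover order is a slightly more careful treatment of the randomness of the index sets and the normalizer $w$, which the paper handles implicitly by pulling the weight ratios out of the expectation directly.
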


\begin{proof}
By Theorem \ref{SCThm1}, for any target cluster-period ($i,j$) such that $X_{i,j} = 1$, $E[Z_{i,j}] = Y_{\cdot j}$ (note that we have dropped the $i^*,j^*$ notation for simplicity). Thus:
\begin{align}
    E \left[\hat{\beta} \right] &= E \left[ \sum_{(i,j): X_{i,j} = 1} \frac{w_{i,j}}{\sum_{(i,j): X_{i,j} = 1} w_{i,j}} \hat{\beta}_{i,j} \right] = \sum_{(i,j): X_{i,j} = 1} E \left[ \frac{w_{i,j}}{\sum_{(i,j): X_{i,j} = 1} w_{i,j}} (Y_{i,j} - Z_{i,j}) \right] \nonumber \\
    &= \sum_{(i,j): X_{i,j} = 1} \frac{w_{i,j}}{\sum_{(i,j): X_{i,j} = 1} w_{i,j}} E[Y_{i,j} - Z_{i,j}] \mbox{, since } w_{i,j} \indep Y_{i,j},Z_{i,j} \nonumber \\
    &= \sum_{(i,j): X_{i,j} = 1} \frac{w_{i,j}}{\sum_{(i,j): X_{i,j} = 1} w_{i,j}} \left( Y_{\cdot j} + \beta - Y_{\cdot j} \right) = \frac{\sum_{(i,j): X_{i,j} = 1} w_{i,j}}{\sum_{(i,j): X_{i,j} = 1} w_{i,j}} \beta = \beta.
\end{align}
\end{proof}

\begin{corollary} \label{SCCor2}
Assume that the assumptions of Corollary 1 are met and that for any intervention cluster-period $(i,j)$, $E[Y_{i,j}] = Y_{\cdot j} + \beta$. Then for the risk difference function, $g(y_1,y_2) = y_1 - y_2$, the synthetic control estimator $\hat{\beta}$ with weights $w_{i,j}$ independent of the outcomes is an asymptotically (with respect to the number of individuals in each cluster) unbiased estimate of $\beta$.
\end{corollary}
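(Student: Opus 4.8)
The plan is to mirror the proof of Theorem \ref{SCThm2} almost verbatim, replacing the exact identity $E[Z_{i,j}] = Y_{\cdot j}$ supplied by Theorem \ref{SCThm1} with its asymptotic analogue supplied by Corollary \ref{SCCor1}. First I would invoke Corollary \ref{SCCor1}: under the stated i.i.d.-individual-outcome assumptions together with the symmetry and cross-cluster independence of the cluster means $Y'_{i,j}$, for every intervention cluster-period $(i,j)$ we have $E[Z_{i,j}] = Y_{\cdot j} + o(1)$, where the $o(1)$ term may depend on $(i,j)$ but vanishes as the number of individuals per cluster $K \to \infty$. Combined with the assumption $E[Y_{i,j}] = Y_{\cdot j} + \beta$, this gives, for the risk difference $g(y_1,y_2) = y_1 - y_2$, the per-term expectation
\[
E[\hat{\beta}_{i,j}] = E[Y_{i,j} - Z_{i,j}] = \beta + o(1).
\]

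Next I would expand the overall estimator exactly as in Theorem \ref{SCThm2}. Because the weights $w_{i,j}$ (and hence the normalized weights $w_{i,j}/w$, which are functions of the $w_{i,j}$ alone) are independent of the outcomes, linearity of expectation and independence let me factor each term as $E[(w_{i,j}/w)\hat{\beta}_{i,j}] = E[w_{i,j}/w]\,E[\hat{\beta}_{i,j}]$, where here $\hat{\beta}_{i,j} = Y_{i,j} - Z_{i,j}$ is a function of the outcomes alone. Substituting the per-term expectation above yields
\[
E[\hat{\beta}] = \sum_{(i,j): X_{i,j}=1} E\!\left[\frac{w_{i,j}}{w}\right](\beta + o(1)) = \beta + \sum_{(i,j): X_{i,j}=1} E\!\left[\frac{w_{i,j}}{w}\right] o(1),
\]
using that the normalized weights have expectations summing to one.

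The remaining step --- and the only point requiring any care --- is to confirm that the residual bias term vanishes in the limit. Since there are only finitely many intervention cluster-periods and each normalized weight lies in $[0,1]$, the sum of $E[w_{i,j}/w]\,o(1)$ terms is bounded in absolute value by $\max_{(i,j)} \abs*{o(1)}$, a finite maximum of quantities each tending to zero, so it is itself $o(1)$. Hence $\lim_{K\to\infty} E[\hat{\beta}] = \beta$, establishing asymptotic unbiasedness. I do not anticipate a genuine obstacle: the argument is a routine propagation of the $o(1)$ error from Corollary \ref{SCCor1} through a finite, outcome-independent convex combination. The only things to watch are that the weights' independence from the outcomes is what permits the factorization, and that finiteness of the number of cluster-periods is what lets the limit pass inside the sum.
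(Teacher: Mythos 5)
Your proposal is correct and follows essentially the same route as the paper's own proof: both invoke Corollary \ref{SCCor1} to replace $E[Z_{i,j}]$ with $Y_{\cdot j}$ asymptotically, factor the outcome-independent weights out of each expectation, and pass the limit through the finite sum of cluster-period terms. Your explicit $o(1)$ bookkeeping and the bound on the residual bias by the maximum of finitely many vanishing terms is just a more careful rendering of the paper's interchange of $\lim_{K\to\infty}$ with the finite weighted sum.
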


\begin{proof}
By Corollary \ref{SCCor1}, for any target cluster-period ($i,j$) such that $X_{i,j} = 1$, $\lim_{K \to \infty} E[Z_{i,j}] = Y_{\cdot j}$ (again dropping the $i^*,j^*$ notation). Thus:
\begin{align}
    \lim_{K \to \infty} E \left[ \hat{\beta} \right] &= \lim_{K \to \infty} E \left[ \sum_{(i,j): X_{i,j} = 1} \frac{w_{i,j}}{\sum_{(i,j): X_{i,j} = 1} w_{i,j}} \hat{\beta}_{i,j} \right] \nonumber\\
    &= \sum_{(i,j): X_{i,j} = 1} \lim_{K \to \infty} E \left[ \frac{w_{i,j}}{\sum_{(i,j): X_{i,j} = 1} w_{i,j}} (Y_{i,j} - Z_{i,j}) \right] \nonumber \\
    &= \sum_{(i,j): X_{i,j} = 1} \frac{w_{i,j}}{\sum_{(i,j): X_{i,j} = 1} w_{i,j}} \lim_{K \to \infty} E[Y_{i,j} - Z_{i,j}] \mbox{, since } w_{i,j} \indep Y_{i,j},Z_{i,j},K \nonumber \\
    &= \sum_{(i,j): X_{i,j} = 1} \frac{w_{i,j}}{\sum_{(i,j): X_{i,j} = 1} w_{i,j}} \left( \lim_{K \to \infty} E[Y_{i,j}] - Y_{\cdot j} \right) \nonumber \\
    &= \sum_{(i,j): X_{i,j} = 1} \frac{w_{i,j}}{\sum_{(i,j): X_{i,j} = 1} w_{i,j}} \left( Y_{\cdot j} + \beta - Y_{\cdot j} \right) \nonumber\\
    &= \frac{\sum_{(i,j): X_{i,j} = 1} w_{i,j}}{\sum_{(i,j): X_{i,j} = 1} w_{i,j}} \beta = \beta.
\end{align}
Thus $\hat{\beta}$ is an asymptotically unbiased estimate of $\beta$.
\end{proof}

\begin{theorem} \label{COThm1}
Assume that there is a constant risk difference $\beta$ due to treatment across clusters and periods; that is, $E[Y_{i,j}|X_{i,j} = 1] = E[Y_{i,j}|X_{i,j} = 0] + \beta$ for all $i$,$j$. Then for any weights $w_j$ that are independent of the outcomes $Y_{i,j}$, the crossover estimators $\hat{\beta}$ and $\tilde{\beta}$ using the risk difference function, $g(y_1,y_2) = y_1 - y_2$, are unbiased estimates of $\beta$. That is, $E[\hat{\beta}] = E[\tilde{\beta}] = \beta$.
\end{theorem}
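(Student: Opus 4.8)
The plan is to reduce everything to a single observation about the expected value of a between-period contrast $D_{i,j} = Y_{i,j} - Y_{i,j-1}$, and then propagate it through the two-stage (period-specific, then weighted) construction of each estimator. First I would introduce the shorthand $\Delta_j \equiv Y_{\cdot j} - Y_{\cdot (j-1)}$ for the secular increment between consecutive periods. By linearity of expectation, $E[D_{i,j}] = E[Y_{i,j}] - E[Y_{i,j-1}]$, so the expected contrast for a cluster is governed entirely by its treatment status in periods $j-1$ and $j$. Using the constant-effect assumption $E[Y_{i,j}\mid X_{i,j}=1] = Y_{\cdot j} + \beta$ together with the fact that $Y_{\cdot j}$ is common to all clusters, I would record the three relevant cases: a cluster in $I_{0,j}$ (control in both periods) has $E[D_{i,j}] = \Delta_j$; a cluster in $I_{1,j}$ (crossing over) has $E[D_{i,j}] = \Delta_j + \beta$; and a cluster in $I_{2,j}$ (intervention in both periods) has $E[D_{i,j}] = \Delta_j$, because the $\beta$ terms in periods $j$ and $j-1$ cancel. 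This last cancellation is exactly what licenses using the $I_{2,j}$ clusters as additional ``controls'' in $\tilde{\beta}_j$.

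Next I would condition on the randomized crossover order, which fixes the membership sets $I_{0,j}, I_{1,j}, I_{2,j}$, the counts $n_{0,j}, n_{1,j}, n_{2,j}$, and the weights $w_j, w'_j$ (all of which are functions of the design, not of the outcome values). Conditional on the order, the period-specific estimators are simple averages, so \eqref{CO1} gives $E[\hat{\beta}_j] = (\Delta_j + \beta) - \Delta_j = \beta$, while for \eqref{CO2} the combined control group $I_{0,j}\cup I_{2,j}$ contributes $\Delta_j$ in expectation from every member, so $E[\tilde{\beta}_j] = (\Delta_j + \beta) - \Delta_j = \beta$ as well. Thus each period-specific estimator is conditionally unbiased for $\beta$, with the secular trend $\Delta_j$ differenced away in both cases.

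Finally, since the normalized weights $w_j/w$ (respectively $w'_j/w'$) sum to one and, being outcome-independent, are measurable with respect to the conditioning, the conditional expectation of the overall estimator is $\sum_j (w_j/w)\beta = \beta$, and likewise for $\tilde{\beta}$. Applying the tower property over the distribution of crossover orders then yields $E[\hat{\beta}] = E[\tilde{\beta}] = \beta$ unconditionally.

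I expect the arithmetic of the three cases and the weighted averaging to be entirely routine; the only real care is needed in handling the randomness that enters through the crossover order. The subtle points are (i) that the denominators $n_{k,j}$ and the weights are random, which I neutralize by conditioning on the order and invoking the stated independence of the weights from the outcomes, and (ii) that conditional on a cluster belonging to, say, $I_{1,j}$, its expected outcome still equals the treatment-status-conditional value $Y_{\cdot j}+\beta$ --- i.e.\ crossover timing influences outcomes only through current treatment status. I would justify (ii) from the randomization of the crossover order together with the assumption that $Y_{\cdot j}$ does not vary across clusters.
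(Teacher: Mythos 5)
Your proposal is correct and follows essentially the same route as the paper's proof: the three-case computation of $E[D_{i,j}]$ for $i \in I_{0,j}$, $I_{1,j}$, $I_{2,j}$ (with the secular increment $\Delta_j = \theta'_j$ cancelling), unbiasedness of each period-specific estimator, and then the weighted average with outcome-independent weights summing to one. Your added step of conditioning explicitly on the crossover order and invoking the tower property just makes rigorous what the paper's conditional expectations $E[D_{i,j}\mid i \in I_{1,j}]$ handle implicitly, so it is a welcome but not substantively different refinement.
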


\begin{proof}
We denote by $Y_{\cdot j}$ the expectation (marginal across clusters) of the outcome of any cluster on control in period $j$. By the assumptions, for all $j > 1$:
\begin{align}
    E[D_{i,j}|i \in I_{0,j}] &= E[Y_{i,j}|X_{i,j}=0] - E[Y_{i,j-1}|X_{i,j-1} = 0] = Y_{\cdot j} - Y_{\cdot j-1}. \\
    E[D_{i,j}|i \in I_{1,j}] &= E[Y_{i,j}|X_{i,j}=1] - E[Y_{i,j-1}|X_{i,j-1}=0] \nonumber \\
    &= E[Y_{i,j}|X_{i,j}=1] - E[Y_{i,j}|X_{i,j}=0] + E[Y_{i,j}|X_{i,j}=0] - E[Y_{i,j-1}|X_{i,j-1}=0] \nonumber \\
    &= \beta + Y_{\cdot j} - Y_{\cdot j-1}. \\
    E[D_{i,j}|i \in I_{2,j}] &= E[Y_{i,j}|X_{i,j}=1] - E[Y_{i,j-1}|X_{i,j-1}=1] \nonumber\\
    &= E[Y_{i,j}|X_{i,j}=1] - E[Y_{i,j}|X_{i,j}=0] + E[Y_{i,j}|X_{i,j}=0] \nonumber \\
    &\quad - E[Y_{i,j-1}|X_{i,j-1}=0]+ E[Y_{i,j-1}|X_{i,j-1}=0] - E[Y_{i,j-1}|X_{i,j-1}=1] \nonumber \\
    &= \beta + Y_{\cdot j} - Y_{\cdot j-1} - \beta = Y_{\cdot j} - Y_{\cdot j-1}.
\end{align}

Define $\theta'_j = Y_{\cdot j} - Y_{\cdot j-1}$ for all $j > 1$. Then:
\begin{align}
    E[\hat{\beta}_j] &= \sum_{i \in I_{1,j}} \frac{1}{n_{1,j}} E[D_{i,j}|i \in I_{1,j}] - \sum_{i \in I_{0,j}} \frac{1}{n_{0,j}} E[D_{i,j}|i \in I_{0,j}] \nonumber \\
    &= \frac{n_{1,j}}{n_{1,j}} (\beta + \theta'_j) - \frac{n_{0,j}}{n_{0,j}} (\theta'_j) = \beta. \label{ExpCO1}\\
    E[\tilde{\beta}_j] &= \sum_{i \in I_{1,j}} \frac{1}{n_{1,j}} E[D_{i,j}|i \in I_{1,j}] - \sum_{i \in I_{0,j}} \frac{1}{n_{0,j}+n_{2,j}} E[D_{i,j}|i \in I_{0,j}] \nonumber \\
    &\qquad - \sum_{i \in I_{2,j}} \frac{1}{n_{0,j}+n_{2,j}} E[D_{i,j}|i \in I_{2,j}] \nonumber \\
    &= \frac{n_{1,j}}{n_{1,j}} (\beta + \theta'_j)  - \frac{n_{0,j}}{n_{0,j}+n_{2,j}} (\theta'_j) - \frac{n_{2,j}}{n_{0,j}+n_{2,j}} (\theta'_j) = \beta. \label{ExpCO2}
\end{align}

Now, for any weights $w_{j}$ that are independent of the outcomes:
\begin{align}
    E[\hat{\beta}] &= E \left[ \sum_{j > 1} \frac{w_j}{w} \hat{\beta}_j \right] = \sum_{j > 1} \frac{w_j}{w} \beta \mbox{, by equation (\ref{ExpCO1})} \nonumber \\
    &= \frac{\sum_{j > 1} w_j}{w} \beta = \beta. \label{ExpCO3} \\
    E[\tilde{\beta}] &= E \left[ \sum_{j > 1} \frac{w_j}{w} \tilde{\beta}_j \right] = \sum_{j > 1} \frac{w_j}{w} \beta \mbox{, by equation (\ref{ExpCO2})} \nonumber \\
    &= \frac{\sum_{j > 1} w_j}{w} \beta = \beta. \label{ExpCO4}
\end{align}

So as desired, $E[\hat{\beta}] = E[\tilde{\beta}] = \beta$.
\end{proof}

\begin{corollary} \label{COCor1}
Assume that there is a constant risk difference $\beta$ due to treatment in the first period on treatment across clusters; that is $E[Y_{i,j}|X_{i,j} = 1 ~ \cap ~ X_{i,j-1} = 0] = E[Y_{i,j}|X_{i,j} = 0] + \beta$ for all $i,j$. Then for any weights $w_j$ that are independent of the outcomes $Y_{i,j}$, the crossover estimator $\hat{\beta}$ using the risk difference function, $g(y_1,y_2) = y_1 - y_2$, is an unbiased estimate of $\beta$. That is, $E[\hat{\beta}] = \beta$.
\end{corollary}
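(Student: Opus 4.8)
The plan is to mirror the proof of Theorem \ref{COThm1}, exploiting the fact that the present hypothesis is strictly weaker but still suffices for the controls-only estimator. The essential observation is that $\hat{\beta}_j$, as defined in equation (\ref{CO1}), involves only the crossover set $I_{1,j}$ and the stay-on-control set $I_{0,j}$; it never touches the stay-on-intervention set $I_{2,j}$. Consequently the only place a treatment effect enters the computation is through the clusters in $I_{1,j}$, for which the relevant period-$j$ outcome is precisely $E[Y_{i,j}\mid X_{i,j}=1,\, X_{i,j-1}=0]$---exactly the quantity the corollary's hypothesis constrains. This is why the weaker assumption about the first treatment period alone is enough, and why the same argument would not salvage $\tilde{\beta}$, which draws on the $I_{2,j}$ clusters whose later-period effects are left unspecified here.

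First I would compute the two conditional expectations of the within-cluster contrast $D_{i,j}$ that feed into $\hat{\beta}_j$. For $i \in I_{0,j}$ the cluster is on control in both periods, so no treatment assumption is needed and $E[D_{i,j}\mid i \in I_{0,j}] = Y_{\cdot j} - Y_{\cdot j-1}$. For $i \in I_{1,j}$ I would split the period-$j$ term using the corollary's hypothesis, $E[Y_{i,j}\mid X_{i,j}=1,\,X_{i,j-1}=0] = Y_{\cdot j} + \beta$, while the period-$(j-1)$ term contributes $Y_{\cdot j-1}$ since the cluster was on control then; this yields $E[D_{i,j}\mid i \in I_{1,j}] = \beta + Y_{\cdot j} - Y_{\cdot j-1}$. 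Writing $\theta'_j \equiv Y_{\cdot j} - Y_{\cdot j-1}$, these are the same two quantities that appear in the proof of Theorem \ref{COThm1}.

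Next I would substitute these into the definition of $\hat{\beta}_j$, observing that the averages over $I_{1,j}$ and $I_{0,j}$ collapse the factors $n_{1,j}/n_{1,j}$ and $n_{0,j}/n_{0,j}$ to one, so the common time-trend term $\theta'_j$ cancels and $E[\hat{\beta}_j] = \beta$ for each eligible period, exactly as in equation (\ref{ExpCO1}). Finally, since the weights $w_j$ are independent of the outcomes, I would pull the expectation through the weighted average in (\ref{CO3}) to obtain $E[\hat{\beta}] = \sum_{j} (w_j/w)\beta = \beta$, reproducing the calculation in equation (\ref{ExpCO3}). There is no genuine obstacle in the algebra; the only real content is recognizing at the outset that the exclusion of $I_{2,j}$ from $\hat{\beta}$ is precisely what lets the weakened hypothesis go through, so I would foreground that structural point rather than the routine arithmetic.
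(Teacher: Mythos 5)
Your proposal is correct and follows essentially the same route as the paper's proof: compute $E[D_{i,j}\mid i\in I_{0,j}]=Y_{\cdot j}-Y_{\cdot j-1}$ and $E[D_{i,j}\mid i\in I_{1,j}]=\beta+Y_{\cdot j}-Y_{\cdot j-1}$ from the first-period hypothesis, so the time trend $\theta'_j$ cancels in $\hat{\beta}_j$ and the outcome-independent weights give $E[\hat{\beta}]=\beta$. Your structural observation that $\hat{\beta}$ never uses $I_{2,j}$ (which is why the weaker hypothesis suffices, unlike for $\tilde{\beta}$) matches the remark the paper itself makes after the corollary.
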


\begin{proof}
Again, we denote by $Y_{\cdot j}$ the expectation (marginal across clusters) of the outcome of any cluster on control in period $j$. By the assumptions, for all $j > 1$:
\begin{align}
    E[D_{i,j}|i \in I_{0,j}] &= E[Y_{i,j}|X_{i,j}=0] - E[Y_{i,j-1}|X_{i,j-1} = 0] = Y_{\cdot j} - Y_{\cdot j-1}. \\
    E[D_{i,j}|i \in I_{1,j}] &= E[Y_{i,j}|X_{i,j}=1] - E[Y_{i,j-1}|X_{i,j-1}=0] \nonumber \\
    &= E[Y_{i,j}|X_{i,j}=1] - E[Y_{i,j}|X_{i,j}=0] + E[Y_{i,j}|X_{i,j}=0] \nonumber \\
    &\qquad - E[Y_{i,j-1}|X_{i,j-1}=0] \nonumber \\
    &= \beta + Y_{\cdot j} - Y_{\cdot j-1}.
\end{align}

Define $\theta'_j = Y_{\cdot j} - Y_{\cdot j-1}$ for all $j > 1$. Then:
\begin{align}
    E[\hat{\beta}_j] &= \sum_{i \in I_{1,j}} \frac{1}{n_{1,j}} E[D_{i,j}|i \in I_{1,j}] - \sum_{i \in I_{0,j}} \frac{1}{n_{0,j}} E[D_{i,j}|i \in I_{0,j}] \nonumber \\
    &= \frac{n_{1,j}}{n_{1,j}} (\beta + \theta'_j) - \frac{n_{0,j}}{n_{0,j}} (\theta'_j) = \beta. \label{ExpCOCor1}
\end{align}

Now, for any weights $w_{j}$ that are independent of the outcomes, by equation (\ref{ExpCOCor1}):
\begin{align}
    E[\hat{\beta}] &= E \left[ \sum_{j > 1} \frac{w_j}{w} \hat{\beta}_j \right] = \sum_{j > 1} \frac{w_j}{w} \beta = \frac{\sum_{j > 1} w_j}{w} \beta = \beta. \label{ExpCOCor3}
\end{align}

So as desired, $E[\hat{\beta}] = \beta$.

\end{proof}

\begin{remark}
Since $E[\hat{\beta}]$ depends only on $E[D_{i,j}|i \in I_{1,j}]$ and $E[D_{i,j}|i \in I_{0,j}]$, it requires only the weaker assumption of Corollary \ref{COCor1} to be unbiased, while $E[\tilde{\beta}]$ requires the stronger assumption given in Theorem \ref{COThm1}.
\end{remark}

\begin{remark}
Specifically, equal weighting and the weights $w_j$ and $w'_j$ given in Section \ref{COweights} are independent of the outcomes $Y_{i,j}$ and thus result in unbiased estimates if the other conditions of Theorem \ref{COThm1} are met.
\end{remark}

\clearpage


\section{Variance and covariance of estimators} \label{app2}
\setcounter{equation}{0}
\renewcommand{\theequation}{B\arabic{equation}}
\renewcommand{\thesection}{B}

\subsection{Simplified data-generating setting}
Throughout this section, we assume that the underlying data-generating process for the cluster-level outcomes follows a mixed-effects model:
\begin{equation}
    Y_{i,j} = \mu + \alpha_i + \theta_j + X_{i,j} \beta + \epsilon_{i,j}
\end{equation}
where the $\alpha_i$ are independently distributed with mean 0 and variance $\tau^2$, the $\epsilon_{i,j}$ are independently distributed with mean 0 and variance $\sigma^2_{i,j}$ and $\alpha_i \indep \epsilon_{i',j'}$ for all $i,i',j'$. Note that this is similar to the risk difference model in Simulation 1 in Section \ref{Results}, but differs in that we force $\epsilon_{i,j} \indep \alpha_i$, so the mean-variance relationship no longer holds.

Thus:
\begin{align*}
    \V(Y_{i,j}) &= \tau^2 + \sigma^2_{i,j}. \\
    \Cov(Y_{i,j},~Y_{i',j'}) &= \begin{cases} \tau^2,& ~i = i', j \neq j' \\ 0,& ~i \neq i' \end{cases}.
\end{align*}

We assume, as in the simulations in Section \ref{Results}, that there are $J$ periods, $J-1$ clusters, with cluster $i$ on control for periods $1,\ldots,i$ and on intervention for periods $i+1,\ldots,J$.

\subsection{Variance of the non-parametric within-period estimator}

The non-parametric within-period estimator is given by:
\begin{equation}
    \hat{\beta}^{NPWP} = \sum_{j=2}^{J-1} v_j \left( \frac{\sum_{i=1}^{j-1} Y_{i,j}}{j-1} - \frac{\sum_{i=j}^{J-1} Y_{i,j}}{J-j} \right) \equiv \sum_{j=2}^{J-1} v_j A_j,
\end{equation}
for weights $v_j$ that sum to 1. For ease of variance calculations, we here assume, contrary to the estimator used Section \ref{Results}, that the weights $v_j$ are fixed and independent of the data (they can still depend on the number of clusters in the intervention and control conditions in period $j$, however).

\begin{align*}
    \V(A_j) &= \V \left( \frac{\sum_{i=1}^{j-1} Y_{i,j}}{j-1} - \frac{\sum_{i=j}^{J-1} Y_{i,j}}{J-j} \right) = \sum_{i=1}^{j-1} \frac{\V(Y_{i,j})}{(j-1)^2} + \sum_{i=j}^{J-1} \frac{\V(Y_{i,j})}{(J-j)^2} \\
    &= \frac{\sum_{i=1}^{j-1} (\sigma^2_{i,j} + \tau^2)}{(j-1)^2} + \frac{\sum_{i=j}^{J-1} (\sigma^2_{i,j} + \tau^2)}{(J-j)^2} = \frac{J-1}{(J-j) (j-1)} \tau^2 + \frac{\sum_{i=1}^{j-1} \sigma^2_{i,j}}{(j-1)^2} + \frac{\sum_{i=j}^{J-1} \sigma^2_{i,j}}{(J-j)^2}.
\end{align*}
For $j \neq k$ (WLOG, assume $k > j$):
\begin{align*}
    \Cov(A_j,A_k) &= \Cov \left( \frac{\sum_{i=1}^{j-1} Y_{i,j}}{j-1} - \frac{\sum_{i=j}^{J-1} Y_{i,j}}{J-j}, ~\frac{\sum_{i=1}^{k-1} Y_{i,k}}{k-1} - \frac{\sum_{i=k}^{J-1} Y_{i,k}}{J-k} \right) \\
    &= \frac{1}{(j-1) (k-1)} \sum_{i=1}^{j-1} \Cov(Y_{i,j},~Y_{i,k}) - \frac{1}{(J-j) (k-1)} \sum_{i=j}^{k-1} \Cov(Y_{i,j},~Y_{i,k}) \\
    &\qquad + \frac{1}{(J-j) (J-k)} \sum_{i=k}^{J-1} \Cov(Y_{i,j},~Y_{i,k}) \\
    &= \tau^2 \left( \frac{1}{k-1} - \frac{k-j}{(J-j) (k-1)} + \frac{1}{J-j} \right) \\
    &= \frac{\tau^2}{(J-j) (k-1)} \left(J-j - (k-j) + k-1 \right) = \tau^2 \frac{J-1}{(J-j) (k-1)}.
\end{align*}
Thus:
\begin{align}
    \V(\hat{\beta}^{NPWP}) &= \Cov \left( \sum_{j=2}^{J-1} v_j A_j,~ \sum_{k=2}^{J-1} v_k A_k \right) \nonumber \\
    &= \sum_{j=2}^{J-1} v_j^2 \V(A_j) + 2 \sum_{j=2}^{J-2} \sum_{k=j+1}^{J-1} v_j v_k \Cov(A_j,~A_k) \nonumber \\
    &= \sum_{j=2}^{J-1} v_j^2 \left[ \frac{J-1}{(J-j) (j-1)} \tau^2 + \frac{\sum_{i=1}^{j-1} \sigma^2_{i,j}}{(j-1)^2} + \frac{\sum_{i=j}^{J-1} \sigma^2_{i,j}}{(J-j)^2} \right] \nonumber \\
    &\quad \qquad + 2 \sum_{j=2}^{J-2} \sum_{k=j+1}^{J-1} v_j v_k \tau^2 \frac{J-1}{(J-j) (k-1)}.
\end{align}

Note that the variance increases as any $\sigma^2_{i,j}$ or $\tau^2$ increase, holding all else constant.

\subsection{Variance of the crossover estimator}
The crossover estimator is given by:
\begin{equation}
    \hat{\beta}^{CO} = \sum_{j=2}^{J-1} w_j \left[ Y_{j-1,j} - Y_{j-1,j-1} - \frac{1}{J-j} \sum_{\ell=j}^{J-1} (Y_{\ell,j} - Y_{\ell,j-1}) \right] \equiv \sum_{j=2}^{J-1} w_j B_j,
\end{equation}
for weights $w_j$ that sum to 1.

Note that for all $i$, $Y_{i,j} - Y_{i,j'} = \epsilon_{ij} - \epsilon_{ij'} + \beta I(X_{ij} \neq X_{ij'})$. Hence:
\begin{align*}
    \V(B_j) &= \V \left( \epsilon_{j-1,j} - \epsilon_{j-1,j-1} + \beta - \frac{1}{J-j} \sum_{\ell=j}^{J-1} (\epsilon_{\ell,j} - \epsilon_{\ell,j-1}) \right) \\
    &= \sigma^2_{j-1,j} + \sigma^2_{j-1,j-1} + \frac{1}{(J-j)^2} \sum_{\ell=j}^{J-1} (\sigma^2_{\ell,j} + \sigma^2_{\ell,j-1}). \\
    \Cov(B_j,B_{j+1}) &= \Cov \left( \epsilon_{j-1,j} - \epsilon_{j-1,j-1} + \beta - \frac{1}{J-j} \sum_{\ell=j}^{J-1} (\epsilon_{\ell,j} - \epsilon_{\ell,j-1}), \right.\\
    &\qquad \qquad\left. \epsilon_{j,j+1} - \epsilon_{j,j} + \beta - \frac{1}{J-j-1} \sum_{m=j+1}^{J-1} (\epsilon_{m,j+1} - \epsilon_{m,j}) \right) \\
    &= \frac{1}{J-j} \left[\sigma^2_{j,j} - \sum_{\ell=j+1}^{J-1} \frac{\sigma^2_{\ell,j}}{J-j-1} \right].
\end{align*}
\begin{align*}
    \Cov(B_j,B_k) &= \Cov \left( \epsilon_{j-1,j} - \epsilon_{j-1,j-1} + \beta - \frac{1}{J-j} \sum_{\ell=j}^{J-1} (\epsilon_{\ell,j} - \epsilon_{\ell,j-1}), \right.\\
    &\qquad \qquad \left. \epsilon_{k-1,k} - \epsilon_{k-1,k-1} + \beta - \frac{1}{J-k} \sum_{m=k}^{J-1} (\epsilon_{m,k} - \epsilon_{m,k-1}) \right) \\
    &= 0 \mbox{ for } k > j+1.
\end{align*}
Thus:
\begin{align}
    \V(\hat{\beta}^{CO}) &= \Cov \left( \sum_{j=2}^{J-1} w_j B_j,~\sum_{k=2}^{J-1} w_k B_k \right) \nonumber\\
    &= \sum_{j=2}^{J-1} w_j^2 \V(B_j) + 2 \sum_{j=2}^{J-2} w_j w_{j+1} \Cov(B_j,~B_{j+1}) \nonumber\\
    &= \sum_{j=2}^{J-1} w_j^2 \left[ \sigma^2_{j-1,j} + \sigma^2_{j-1,j-1} + \frac{1}{(J-j)^2} \sum_{\ell=j}^{J-1} (\sigma^2_{\ell,j} + \sigma^2_{\ell,j-1}) \right] \nonumber\\
    &\qquad + 2 \sum_{j=2}^{J-2} w_j w_{j+1} \frac{1}{J-j} \left( \sigma^2_{j,j} -  \sum_{\ell=j+1}^{J-1} \frac{\sigma^2_{\ell,j}}{J-j-1} \right).
\end{align}

Note that, unlike the variance of the NPWP estimator, this variance does not depend on $\tau^2$.

\subsection{Covariance of the non-parametric within-period estimator and the crossover estimator}
We begin by noting that:
\begin{align*}
    A_j &= \frac{1}{j-1} \sum_{i=1}^{j-1} (\mu + \alpha_i + \theta_j + \beta + \epsilon_{i,j}) - \frac{1}{J-j} \sum_{i=j}^{J-1} (\mu + \alpha_i + \theta_j + \epsilon_{i,j}) \\
    &= \mu + \theta_j + \beta + \frac{1}{j-1} \sum_{i=1}^{j-1} (\alpha_i + \epsilon_{i,j}) - (\mu + \theta_j) - \frac{1}{J-j} \sum_{i=j}^{J-1} (\alpha_i + \epsilon_{i,j}) \\
    &= \beta + \frac{1}{j-1} \sum_{i=1}^{j-1} (\alpha_i + \epsilon_{i,j}) - \frac{1}{J-j} \sum_{i=j}^{J-1} (\alpha_i + \epsilon_{i,j}).
\end{align*}
Hence, dropping the $\beta$ terms in $A_j$ and $B_j$ since they do not contribute any variability:
\begin{align*}
    \Cov(A_j,~B_j) &= \frac{1}{j-1} \sum_{i=1}^{j-1} \Cov \left( \alpha_i + \epsilon_{i,j},~\epsilon_{j-1,j} - \epsilon_{j-1,j-1} - \frac{1}{J-j} \sum_{\ell=j}^{J-1} (\epsilon_{\ell,j}-\epsilon_{\ell,j-1}) \right) \\
    &\qquad -\frac{1}{J-j} \sum_{i=j}^{J-1} \Cov \left( \alpha_i + \epsilon_{i,j},~\epsilon_{j-1,j} - \epsilon_{j-1,j-1} - \frac{1}{J-j} \sum_{\ell=j}^{J-1} (\epsilon_{\ell,j} - \epsilon_{\ell,j-1}) \right) \\
    &= \frac{1}{j-1} \sigma^2_{j-1,j} + \frac{1}{(J-j)^2} \sum_{i=j}^{J-1} \sigma^2_{i,j}. \\
    \Cov(A_j,~B_k) &= \frac{1}{j-1} \sum_{i=1}^{j-1} \Cov \left( \alpha_i + \epsilon_{i,j},~\epsilon_{k-1,k} - \epsilon_{k-1,k-1} - \frac{1}{J-k} \sum_{\ell=k}^{J-1} (\epsilon_{\ell,k} - \epsilon_{\ell,k-1}) \right) \\
    &\qquad - \frac{1}{J-j} \sum_{i=j}^{J-1} \Cov \left(\alpha_i + \epsilon_{i,j},~\epsilon_{k-1,k} - \epsilon_{k-1,k-1} - \frac{1}{J-k} \sum_{\ell=k}^{J-1} (\epsilon_{\ell,k} - \epsilon_{\ell,k-1}) \right).
\end{align*}
For $j = k-1$:
\begin{align*}
    \Cov(A_j,~B_k) &= \frac{\sigma^2_{j,j}}{J-j} -  \frac{1}{(J-j)(J-j-1)} \sum_{\ell=j+1}^{J-1} \sigma^2_{\ell,j}.
\end{align*}
For $j < k-1$ or $j > k$, $\Cov(A_j,~B_k) = 0$.

Therefore:
\begin{align}
    \Cov \left(\hat{\beta}^{NPWP},~\hat{\beta}^{CO}\right) &= \Cov \left( \sum_{j=2}^{J-1} v_j A_j,~\sum_{k=2}^{J-1} w_k B_k \right) \nonumber \\
    &= \sum_{j=2}^{J-1} \sum_{k=2}^{J-1} v_j w_k \Cov(A_j,~B_k) \nonumber \\
    &= \sum_{j=2}^{J-1} v_j w_j \left[ \frac{1}{j-1} \sigma^2_{j-1,j} + \frac{1}{(J-j)^2} \sum_{i=j}^{J-1} \sigma^2_{i,j} \right] \nonumber \\
    &\qquad + \sum_{j=2}^{J-2} v_j w_{j+1} \left[ \frac{\sigma^2_{j,j}}{J-j} -  \frac{1}{(J-j)(J-j-1)} \sum_{i=j+1}^{J-1} \sigma^2_{i,j} \right].
\end{align}

Note that, like the variance of the crossover estimator, the covariance here does not depend on $\tau^2$.

\subsection{Conditions for the ensemble estimator to have lower variance}
If we moreover assume that $\sigma^2_{ij} = \sigma^2$ for all $i,j$, then:
\begin{align*}
    \V(\hat{\beta}^{NPWP}) &= (\tau^2 + \sigma^2) \sum_{j=2}^{J-1} v_j^2 \left[ \frac{J-1}{(J-j) (j-1)} \right] + 2 \tau^2 \sum_{j=2}^{J-2} \sum_{k=j+1}^{J-1} v_j v_k \frac{J-1}{(J-j) (k-1)}, \\
    &= (\tau^2 + \sigma^2) \sum_{j=2}^{J-1} v_j^2 \left( \frac{1}{j-1} + \frac{1}{J-j} \right) + 2 \tau^2 \sum_{j=2}^{J-2} \sum_{k=j+1}^{J-1} v_j v_k \frac{J-1}{(J-j) (k-1)}. \\
    \V(\hat{\beta}^{CO}) &= 2 \sum_{j=2}^{J-1} w_j^2 \sigma^2 \left(1 + \frac{1}{J-j} \right).\\
    \Cov(\hat{\beta}^{NPWP},~\hat{\beta}^{CO}) &= \sum_{j=2}^{J-1} v_j w_j \sigma^2 \left( \frac{1}{j-1} + \frac{1}{J-j} \right) < \sum_{j=2}^{J-1} v_j w_j \sigma^2 \left(1 + \frac{1}{J-j} \right).\\
    \Cov(\hat{\beta}^{NPWP},~\hat{\beta}^{CO}) &= \sum_{j=2}^{J-1} v_j w_j \sigma^2 \left( \frac{1}{j-1} + \frac{1}{J-j} \right) \\
    &< \sigma^2 \sum_{j=2}^{J-1} v_j w_j \left( \frac{1}{j-1} + \frac{1}{J-j} \right) + 2 \tau^2 \sum_{j=2}^{J-2} \sum_{k=j+1}^{J-1} v_j v_k \frac{J-1}{(J-j) (k-1)}.
\end{align*}

If $v_j \le 2 w_j$ for all $j=1,\ldots,n$, then $\Cov(\hat{\beta}^{NPWP},~\hat{\beta}^{CO}) < \V(\hat{\beta}^{CO})$. If $\sigma^2 w_j < (\tau^2 + \sigma^2) v_j$ for all $j=1,\ldots,n$, then $\Cov(\hat{\beta}^{NPWP},~\hat{\beta}^{CO}) < \V(\hat{\beta}^{NPWP})$.

Now consider the ensemble estimator $\hat{\beta}^{ENS} = \frac{1}{2} \hat{\beta}^{NPWP} + \frac{1}{2} \hat{\beta}^{CO}$. Then:
\[ \V(\hat{\beta}^{ENS}) = \frac{1}{4} \V(\hat{\beta}^{NPWP}) + \frac{1}{4} \V(\hat{\beta}^{CO}) + \frac{1}{2} \Cov(\hat{\beta}^{NPWP},~\hat{\beta}^{CO}). \]
So a necessary and sufficient condition for $\V(\hat{\beta}^{ENS}) < \min \left(\V(\hat{\beta}^{NPWP}),~\V(\hat{\beta}^{CO})\right)$ is:
\[ 2~\Cov(\hat{\beta}^{NPWP},~\hat{\beta}^{CO}) < 3 \min \left(\V(\hat{\beta}^{NPWP}),~\V(\hat{\beta}^{CO})\right) - \max \left(\V(\hat{\beta}^{NPWP}),~\V(\hat{\beta}^{CO})\right). \]

For $\tau^2 = 0$, $\V(\hat{\beta}^{NPWP} < \hat{\beta}^{CO})$ for some values of $v_j$, $w_j$, and $\sigma^2$ (e.g., for $v_j = w_j$ for all $\sigma^2$). Take such a set of values of $v_j$, $w_j$, and $\sigma^2$. Then, since $\V(\hat{\beta}^{NPWP})$ is an increasing function of $\tau^2$ and $\V(\hat{\beta}^{CO})$ is independent of $\tau^2$, there is a $\tau^2$ such that $\V(\hat{\beta}^{NPWP}) = \V(\hat{\beta}^{CO})$. So the condition simplifies to:
\[ 2~\Cov(\hat{\beta}^{NPWP},~\hat{\beta}^{CO}) < 2~\V(\hat{\beta}^{CO}). \]
For $v_j \le 2 w_j$ for all $j=2,\ldots,J-1$, $\Cov(\hat{\beta}^{NPWP},~\hat{\beta}^{CO}) < \V(\hat{\beta}^{CO})$. Hence, in this setting, the ensemble estimator has a lower variance than either single estimator.

This suggests that when $\tau^2$ is low enough that the crossover estimator and the non-parametric within-period estimator to have similar variances, and the $v_j$ and $w_j$ are close enough for the covariance to be lower than either of these variances, the ensemble method has lower variance than either single estimator.

\clearpage
\renewcommand{\thesection}{C}

\section{Goodness of fit of mixed effects models} \label{app3}
\setcounter{equation}{0}
\setcounter{figure}{0}
\renewcommand{\theequation}{C\arabic{equation}}
\renewcommand{\thefigure}{C\arabic{figure}}

Figure \ref{FigC1} provides two graphical displays of the goodness of the model fit: the quantile-quantile plot of the standardized estimated random effects compared to a normal distribution and the plot of residuals; both are shown for the risk difference (identity link) model, but are similar for the odds ratio (logit link) model.

These figures suggest violations of the model assumptions, including potential non-normality of the random effects. The Mahalanobis distance from the estimated random effects to the normal distribution,\cite{Singer2013-op} however, is not significant at the 0.05 level. Common goodness-of-fit tests applied to this distribution, such as the Kolmogorov-Smirnov, Anderson-Darling, and Cram\'er-von Mises tests,\cite{Ritz2004-yq} also yield non-significant results (all with $p$-values above $0.65$). The power of these tests to detect violations of the assumptions with a small number of clusters can be quite low, however.\cite{Yap2011-bu}

\begin{center}
\begin{figure}[!ht]
\centering
\includegraphics[width=.83\textwidth]{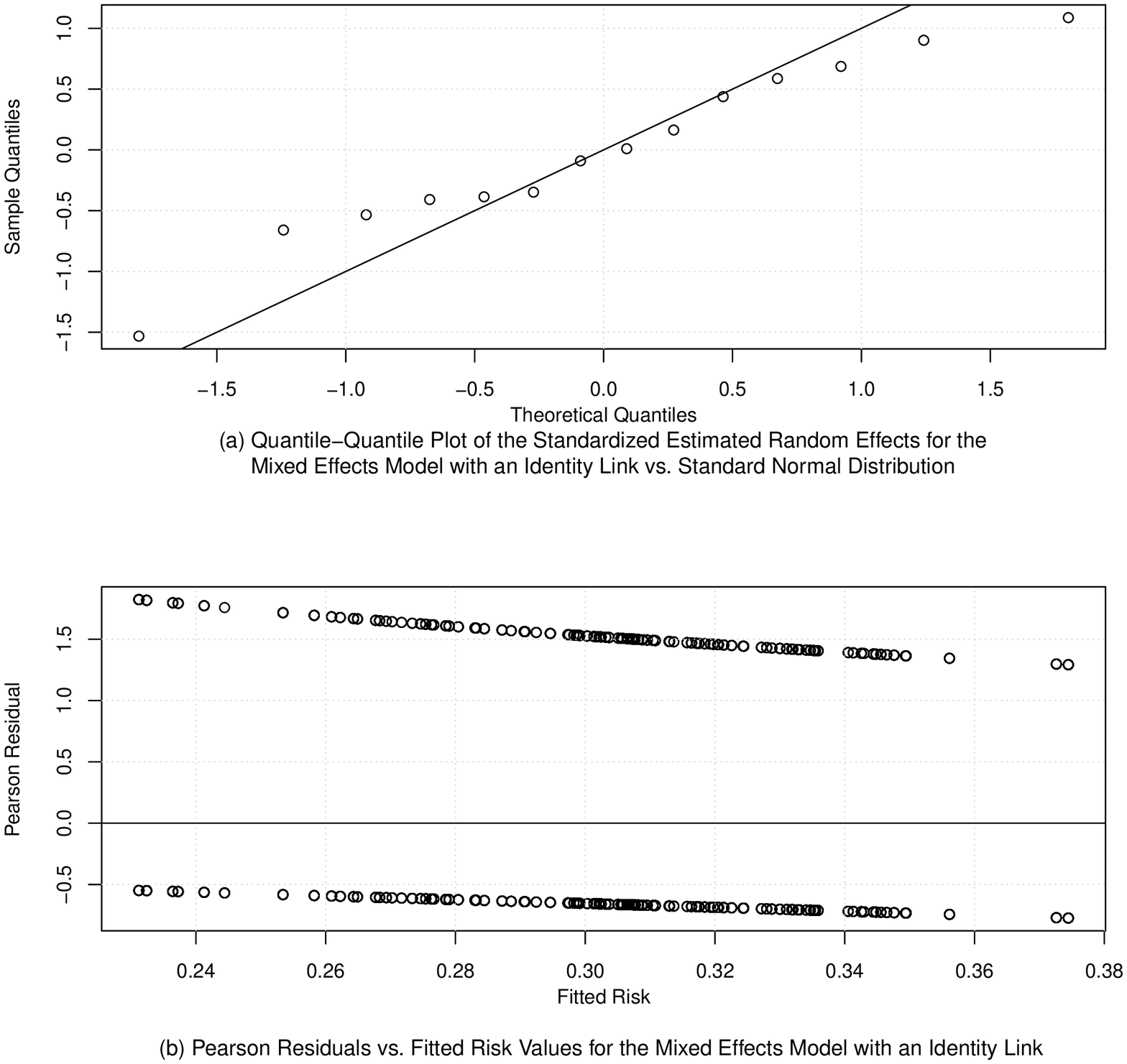}
\caption{Diagnostics for Mixed Effects Models for Tuberculosis SW-CRT}\label{FigC1}
\end{figure}
\end{center}
\end{document}